\def\be{\begin{equation}}
\def\ee{\end{equation}}
\def\bee{\begin{eqnarray}}
\def\ene{\end{eqnarray}}
\def\bes{\begin{subequations}}
\def\ees{\end{subequations}}
\newcommand{\PT}{{\cal PT}}
\newcommand{\dd}{\mathrm{d}}
\def\v{\vspace{0.1in}}
\def\be{\begin{equation}}
\def\ee{\end{equation}}
\def\bee{\begin{eqnarray}}
\def\ene{\end{eqnarray}}
\def\bes{\begin{subequations}}
\def\ees{\end{subequations}}
\def\d{\displaystyle}
\def\v{\vspace{0.1in}}
\newtheorem{lemma}{Lemma}
\newtheorem{theorem}{Theorem}
\newtheorem{definition}{Definition}
\newtheorem*{proof}{Proof}
\newtheorem{remark}{Remark}
\begin{document}

\baselineskip=14pt
\renewcommand {\thefootnote}{\dag}
\renewcommand {\thefootnote}{\ddag}
\renewcommand {\thefootnote}{ }

\pagestyle{plain}

%\begin{center}
%\baselineskip=16pt \leftline{} \vspace{-.3in} {\Large \bf Two-stage initial-value iterative physics-informed deep learning for solitary wave computations of nonlinear wave equations} \\[0.2in]
%\end{center}
\begin{center}
\baselineskip=16pt \leftline{} \vspace{-.3in} {\Large \bf Two-stage initial-value iterative physics-informed neural networks \\ for
simulating solitary waves of nonlinear wave equations} \\[0.2in]
\end{center}

\begin{center}
Jin Song$^{1,2}$, Ming Zhong$^{1,2}$, George Em Karniadakis$^{3}$ and Zhenya Yan$^{1,2,*}$\footnote{{\it Email address}: zyyan@mmrc.iss.ac.cn (Corresponding author)}  \\[0.15in]
{\it \small $^1$KLMM, Academy of Mathematics and Systems Science, Chinese Academy of Sciences, Beijing 100190, China\\
$^2$School of Mathematical Sciences, University of Chinese Academy of Sciences, Beijing 100049, China\\
$^3$Division of Applied Mathematics, Brown University, Providence, RI, 02912, USA}
% (Date:\,\, \today)
\end{center}

\vspace{0.1in} \noindent {\bf Abstract:}\, We propose a new two-stage initial-value iterative neural network (IINN) algorithm for solitary wave computations of nonlinear wave equations
based on traditional numerical iterative methods and physics-informed neural networks (PINNs). Specifically, the IINN framework consists of two subnetworks, one of which is used to fit a given initial value, and the other incorporates physical information and continues training on the basis of the first subnetwork. Importantly, the IINN method does not require any additional data information including boundary conditions, apart from the given initial value. Corresponding theoretical guarantees are provided to demonstrate the effectiveness of our IINN method.  The proposed IINN method is efficiently applied to  learn some types of solutions in different nonlinear wave equations, including the one-dimensional (1D) nonlinear Schr\"{o}dinger equations (NLS) equation (with and without potentials), the 1D saturable NLS equation with $\PT$-symmetric optical lattices,
the 1D focusing-defocusing coupled NLS equations, the KdV equation, the two-dimensional (2D) NLS equation with potentials, the 2D amended GP equation with a potential, the (2+1)-dimensional KP equation,
and the 3D NLS equation with a potential.
These applications serve as evidence for the efficacy of our method.
Finally, by comparing with the traditional methods, we demonstrate the advantages of the proposed IINN method.

\vspace{0.1in} \noindent  {\it Keywords:} \,
Nonlinear wave equations,\,
Physics-informed deep learning,\,
Initial-value iterative neural network,\,
Solitary waves

%\vspace{0.1in} \noindent {\bf Mathematics Subject Classification }  37K15 $\cdot$ 35Q55 $\cdot$ 35Q15 $\cdot$ 37K40
%\end{abstract}

%\vspace{-0.05in}
%\begin{tabular}{p{16cm}}
%  \hline \\
%\end{tabular}

\vspace{0.1in}

%\maketitle

\baselineskip=13pt

\section{Introduction}\label{sec1}

Solitary waves, discovered and named by Russell in 1833-1834, play an important role in the study of nonlinear wave equations, which can describe shallow water wave mechanics and light propagation in nonlinear photonic lattices \cite{solitary,op1,op2,op3}. For example, the Korteweg–de Vries (KdV) equation, a mathematical physical model for waves on shallow water surfaces, demonstrates several characteristics expected of an integrable partial differential equation (PDE). It possesses a wide range of explicit solutions, particularly soliton solutions, which can be obtained analytically by using the inverse scattering transform (IST)~\cite{GGKM}. Moreover, many other nonlinear integrable PDEs can also be analytically solved to find their solitons via the IST~\cite{ab,Ablowitz-1,Ablowitz-2}.
However, in the case of most non/nearly-integrable nonlinear PDEs, their analytical solution expressions are not available, and numerical computations are required to study these nonlinear waves (e.g.,  solitary waves)~\cite{op2,yang}.

Various traditional numerical methods have been developed to tackle this challenge. One classical approach is the shooting method, which involves reducing a boundary value problem to an initial value problem \cite{shoot1,shoot2}. It involves finding solutions to the initial value problem for different initial conditions until one finds the solution that also satisfies the boundary conditions of the boundary value problem. While the shooting method is effective for solving 1D problems, it is not applicable in higher dimensions.
Another important class of methods are iterative methods, such as the Petviashvili method \cite{yang}, imaginary-time evolution method (ITEM) \cite{it1,it2,it}, squared-operator iteration method (SOM) \cite{som} and Newton’s method \cite{netwon,NCG}.
In this category of methods, the solution is updated by a fixed iterative scheme. For instance, in ITEM, a solitary wave with a specified power is sought by numerically integrating the underlying nonlinear wave equation with the evolution variable $t$ replaced by $it$ \cite{it}, where $i$ is the imaginary unit.
In Newton’s method, the solution is updated by solving a linear inhomogeneous operator equation, where the inhomogeneous term is the residue of the nonlinear wave equation \cite{netwon2}.
Moreover, the conjugate-gradient method is applied in solving this linear equation not by direct methods as in the traditional Newton’s method, which speeds up the convergence considerably \cite{NCG}. These numerical methods can achieve fast convergence.
However, the choice of discretization scheme has a significant impact on the algorithm’s accuracy and implementation difficulty. In general, the finite-difference discretization is commonly used, while it has a low accuracy compared to spectral method.
In particular, for high-dimensional problems, the memory required by finite difference methods will increase exponentially.
Spectral methods, known for their high accuracy and fast convergence speed, also have drawbacks, such as the inability to adapt to complex computational domains \cite{spectral}; spectral elements methods can be employed but they are relatively complex to implement \cite{Karniadakis_Sherwin_book}.
Therefore, there is an urgent need for a new and efficient method that can handle high-dimensional and complex regional problems, offering fast convergence and easy implementation.

Recently, the remarkable progress in machine learning has revolutionized various scientific fields, such as image recognition, natural language processing, cognitive science, data assimilation, and many others \cite{DL,aml,Imagenet classification,app2,app3,app4,app5}.
These advancements have been made possible by the rapid expansion of computing resources.
Especially, one emerging subfield in machine learning is the use of deep learning to solve PDEs under the concept of Scientific Machine Learning (SciML) \cite{sml}.
Neural networks (NNs) are capable of approximating solutions to PDEs based on the universal approximation theorems \cite{universal app}.
This has led to the development of various methods for solving PDEs using NNs \cite{ewn,relunn,jam,palr,pinn,deepxde}.
For example, in Ref.~\cite{ewn}, the variational form of PDEs is adopted, and by minimizing the corresponding energy functional the solution can be obtained.
In particular, with the aid of automatic differentiation \cite{ad1,ad2}, one could consider a residual term from the given PDEs in strong form directly.
These methods, known as physics-informed neural networks (PINNs) \cite{pinn,deepxde}, leverage automatic differentiation to avoid truncation errors and numerical quadrature errors of variational forms.
Compared to traditional mesh-based methods like the finite difference method and spectral methods mentioned earlier, deep learning offers a mesh-free approach by taking advantage of automatic differentiation, and could overcome the curse of dimensionality \cite{d1}.
Based on these advantages, the NN methods have also been applied extensively to different types of PDEs, and variants and extensions targeted at different application scenarios have also
subsequently emerged  \cite{NRP21,JK,WYP,spde,fpde,bpinn,two,pu2021solving,li2022mix,wu2022prediction,wang2021datadriven,zhong2022data1,song2021deep}.
However, most of these NN methods focused on solving the initial-boundary value problems, where the solutions of the equations were uniquely determined and the initial-boundary value conditions needed to be taken into account in the loss functions.
If these methods are directly applied to solitary wave computations, it is difficult to obtain the desired solution since there are many solitary wave solutions satisfying the physical constraints.
To the best of our knowledge, there is no effective research on the required solitary waves for the multi-solution problems of nonlinear wave equations by deep learning methods.
%there is limited research on the solitary wave computations of nonlinear wave equations by deep learning methods. % XXX not true, there are JCP papers on this topic already.
Whether deep learning methods can be effectively used to solve multi-solution problems is still an unknown and significant topic.

%To the best of our knowledge, there is limited research on the solitary waves of nonlinear wave equations with initial-boundary value problems by deep learning methods, % XXX not true, there are JCP papers on this topic already. but there is no an effectively deep learning method to solve  the multi-solution problem (Generally speaking, a nonlinear wave equation may possess many types of solitary waves) to learn the required type of solitary wave. %with an initial value Whether deep learning methods can be used to solve multi-solution problems is still an unknown topic.

To fill in the gap, a novel algorithm called the two-stage initial-value iterative neural network (IINN) is proposed herein for solitary wave computations of nonlinear wave equations, whose ideas combine numerical iterative methods with PINNs.
IINN consists of two subnetworks. We first choose an appropriate initial value such that the first subnetwork approximates it sufficiently, which resembles the concept of given initial values in iterative methods.
Then, we initialize the parameters in the second subnetwork with the learned weights and biases from the first network, and consider the PDE residual in the loss function and minimize it.
In other words, we continue to optimize our network based on the given initial values such that the output satisfies the given equation.
From the machine learning perspective, the approach is known as transfer learning, where knowledge gained from training one model is transferred to another model, typically when the two models have similar tasks or domains.
One key advantage of the IINN method is that it does not require additional data information including boundary conditions, apart from the given initial value. This significantly reduces the difficulty of network optimization.
The effectiveness of the proposed method is supported by corresponding theoretical guarantees. In addition, the IINN has demonstrated robustness and convergence in various numerical testings involving different physical wave systems and a wide range of initial conditions, as long as the initial condition is reasonably close to the exact solution.

The remainder of this paper is arranged as follows.  Preliminaries including some notations, definitions and known methods are given in Sec.~2.
Then in Sec.~3, we propose the IINN algorithm and provide corresponding theoretical guarantees.
We demonstrate the performance of the IINN method by applying it to various examples of solitary wave computations of many types of nonlinear wave equations, especially for high-order/higher-dimensional nonlinear wave equations. We also present the comparison between the IINN method and traditional methods in Sec.~4. Finally, some conclusions and discussions are given in Sec.~5.

\section{Preliminaries}

\subsection{Some notations and definitions}

%Firstly, some notations and definitions are introduced, which are required throughout this paper.

\bf{Notations:} \rm Let $\mathbb{R}^d$ and $\mathbb{C}^m$ be the $d$-dimensional real space and $m$-dimensional complex space, respectively. $|\cdot|$ stands for the $L_1$ norm for scalar or vector, and $\|\cdot\|_2$ represents the $L_2$ norm. $\mathcal{C}^2$ represents the space of twice-continuously differentiable functions. $\nabla$ is gradient operator and $\Delta$ is Laplace operator. $\mathrm{Pr}(A)$ represents the probability of event $A$ occurring. Let $p(a,b)=\|a-b\|_2$ be the distance between vectors $a$ and $b$. Let $d(a,V)=\min_{v\in V}p(a,v)$ represent the distance between vector $a$ and set $V$. Suppose vector function $f$ is twice differentiable, then let $H(f)$ represent the Hessian matrix of $f$. Let $\lambda_{\min}(\cdot)$ be the minimum eigenvalue of a matrix.

\begin{definition}
  Let $f$ denote a real-valued function in domain $\Omega$. Then, $f$ has Lipschitz continuity if there exists constant $\rho>0$ and $f$ satisfies
  \begin{equation}\label{lp}
    \|\nabla f(x)-\nabla f(y)\|_2\leq \rho\|x-y\|_2, \quad \mathrm{for}\,\,\mathrm{any}\,\, x, \,y\in\Omega.
  \end{equation}
\end{definition}

\begin{definition}
  Let $x\in\mathbb{R}^d$ be a vector and $\Lambda\subseteq\mathbb{R}^d$ be a set. Then, $x$ is isolated in $\Lambda$ if there is a neighborhood $U\subseteq\mathbb{R}^d$ around $x$, and $U\cap \Lambda=\oslash$.
\end{definition}

\begin{definition}
  Suppose that $f$ is twice differentiable, then $x^*$ is a strict saddle if $\lambda_{\min}(H(f)_{x=x^*})<0$.
\end{definition}

\subsection{Problem statement}
The problem we are interested in is the computation of solitary waves in a general nonlinear wave system in arbitrary spatial dimensions, which are special localized solutions that maintain their shapes as they propagate. The system can be written in the following form:
\begin{equation}\label{L0}
  \mathbf{L_0}\mathbf{u}(\mathbf{x})=0,
\end{equation}
where $\mathbf{L_0}$ is a nonlinear operator, $\mathbf{x}=(x_1,x_2,\cdots, x_d)\in\mathbb{R}^d$ is a vector spatial variable, $\mathbf{u}(\mathbf{x})\in\mathbb{C}^m$ is a complex-valued vector solitary wave solution, and $\mathbf{u}\rightarrow 0$ as $|\mathbf{x}|\rightarrow\infty$. During practical computations, it is common to restrict $\mathbf{x}$ to a sufficiently large finite domain, that is, $\mathbf{x}\in\Omega\subseteq \mathbb{R}^d$, and $\mathbf{u}(\mathbf{x})\to 0$ when $\mathbf{x}\to \partial \Omega$.
For example, the $d$-dimensional scalar generalized nonlinear Schr\"{o}dinger (NLS) equation with a potential has the following form:
\begin{equation}\label{U}
  iU_t-\Delta U + V(\mathbf{x}) U + \mathcal{N}(\mathbf{x}, |U|^2)U=0,
\end{equation}
where {\color{red}$U=U(\mathbf{x},t)\in\mathbb{C}$} is a complex field of the $d$-dimensional spatial variable $\mathbf{x}\in\mathbb{R}^d$ and time $t$, $\Delta=\partial_{x_1}^2+\partial_{x_2}^2+\cdots+\partial_{x_d}^2$ a $d$-dimensional Laplacian, $V(\mathbf{x})$ a real or complex potential, and $\mathcal{N}(\mathbf{x}, |U|^2)$ a function of $\mathbf{x}$ and intensity $|U|^2$.
The stationary solitary waves (e.g., ground states and excited states) of this equation can be written in the form
\begin{equation}\label{solu}
  U(\mathbf{x},t)=u(\mathbf{x})e^{i\mu t},
\end{equation}
where $u(\mathbf{x})$ is a complex and localized function and $\mu\in\mathbb{R}$ is the propagation constant.
Substituting it into Eq.~(\ref{U}) yields the stationary differential equation for $u(\mathbf{x})$
\begin{equation}\label{lu}
    L u =0, \quad  \mathrm{where} \quad  L= -\Delta + V(\mathbf{x}) + \mathcal{N}(\mathbf{x}, |u|^2)-\mu.
\end{equation}
Notice that when $m=1$ ($\mathbf{u}(\mathbf{x})\in\mathbb{C}$), we denote $\mathbf{L_0}\mathbf{u}$ as $L u$, and the same applies in later examples.
Eq.~(\ref{lu}) admits solitary waves of various forms for a large class of functions $\mathcal{N}(\mathbf{x}, |u|^2)$ and potential $V(\mathbf{x})$. Especially, for the same equation, there may be different forms of solitary waves (such as trivial solution, degenerate state, symmetry breaking bifurcations, and so on). In the area of scientific computing, especially in solving forward problems of nonlinear partial differential equations, the discovery of solitary waves is still an open problem (see, e.g., Ref.~\cite{yang} and reference therein).

\subsection{Traditional numerical methods}

Numerous numerical methods have been developed thus far to compute solitary waves of nonlinear wave euqations. One is the shooting method, which is efficient for 1D problems but does not apply in higher dimensions \cite{shoot1,shoot2}.
Other methods commonly used for solving these problems are iterative methods in the scheme $\mathbf{u}_{n+1}=\mathcal{M}_n\mathbf{u}_n$ for given initial state $\mathbf{u}_0$ with iterative operator $\mathcal{M}_n$,
including the Petviashvili method, accelerated imaginary-time evolution (AITEM) method, squared-operator iteration (SOM) method and Newton-conjugate-gradient (NCG) method \cite{yang,pi,it,som,NCG}. Among them, both the Petviashvili method and the AITEM can only converge to the ground states of nonlinear wave equations and would diverge for excited states. For multi-component equations, they may even diverge for the ground states.
Furthermore, these iterative methods require that the initial conditions are sufficiently close to the desired exact solutions in order to guarantee algorithm convergence.

Although SOM and NCG methods have higher convergence speed and convergence rates, they are more challenging to operate when dealing with high-dimensional problems and specific region problems.
Taking into account the finite difference methods or spectral methods used in discretizing derivatives, meshing is necessary to discretize the domain, which can lead to exponential growth in the amount of storage required and the complexity of computations.

\subsection{The PINNs method}

Recently, being different from traditional numerical method, deep neural networks were introduced to approximate the solution of partial differential equations (PDEs) with the aid of automatic differentiation methods, which reduce the cost of constructing computationally-expensive grids. Especially, the physics-informed neural networks (PINNs) approach \cite{pinn} was used to consider the important physical laws given by the PDEs to control the output solution of a deep neural network, which significantly reduces the required amount of data.
The PINNs framework for the data-driven solutions of nonlinear systems (\ref{L0}) can be introduced as follows.

Firstly, a fully-connected neural network $\mathrm{NN}(\mathbf{x}; \theta)$ with $n$ hidden layers and $m$ neurons in each layer is constructed to learn the solution $\mathbf{u}(\mathbf{x})$, where the parameters $\theta=\{W,B\}$ with $W = \{w_j\}_{1}^{n+1}$ and $B = \{b_j\}_1^{n+1}$ being the weight matrices and bias vectors, respectively. Then the vector data of the hidden layers
and output layer can be generated by following affine transformation $\mathcal{F}_j$
\begin{equation}\label{sigma}
\begin{array}{l}
 A_j=\sigma\left(\mathcal{F}_j(A_{j-1})\right)=\sigma(w_j\cdot A_{j-1}+b_j),\quad j=1,2,...,n,\quad
  A_{n+1}=\mathcal{F}_{n+1}(A_{n})=w_{n+1}\cdot A_{n}+b_{n+1},
  \end{array}
\end{equation}
where $\sigma(\cdot)$ denotes some nonlinear activation function, $w_j$ is a dim$(A_j)\times $dim$(A_{j-1})$ matrix, $A_0=\mathbf{x}$, and $A_j=(a_{j1},...,a_{jm})^T$, $b_j=(b_{j1},...,b_{jm})^T$. Therefore the relation between input $\mathbf{x}$ and output $\hat{\mathbf{u}}(\mathbf{x};\theta)$ is given by
\begin{equation}\label{uu}
  \hat{\mathbf{u}}(\mathbf{x};\theta)= A_{n+1}=\left(\mathcal{F}_{n+1}\circ \sigma\circ\mathcal{F}_{n}\circ\cdots \circ\sigma\circ\mathcal{F}_{1}\right)(\mathbf{x}),
\end{equation}
where the activation function $\sigma$ is chosen as the hyperbolic tangent function $\tanh(\cdot)$ to ensure the smoothness of $\hat{\mathbf{u}}$.

To ensure that the output $\hat{\mathbf{u}}(\mathbf{x};\theta)$ satisfies the equation (\ref{L0}), we utilize the total mean squared error (MSE) to define the following loss function and optimize parameters $\theta$ to minimize the value of loss.
\begin{equation}\label{Loss0}
 \mathcal{L}_0:=MSE_L+MSE_b=\d\frac{1}{N_f}\sum_{\ell=1}^{N_f}|\mathbf{L}_0\hat{\mathbf{u}}(\mathbf{x}_f^{\ell})|^2+\frac{1}{N_b}\sum_{\ell=1}^{N_b}|\hat{\mathbf{u}}(\mathbf{x}_b^{\ell})|^2,
\end{equation}
where $\{\mathbf{x}_f^\ell\}_\ell^{N_f}$ are connected with the randomly chosen sample points in $\Omega$, and $\{\mathbf{x}_b^\ell\}_\ell^{N_b}$ are linked
with the randomly selected boundary points in $\partial\Omega$. With the aid of some optimization approaches (e.g., SGD, Adam \& L-BFGS~\cite{adam,bfgs}), we minimize the loss $\mathcal{L}_0$ to make the learned solution $\hat{\mathbf{u}}(\mathbf{x};\theta)$ satisfy Eq.~(\ref{L0}).

It should be noted that before training an $\mathrm{NN}$ model, the parameters $\theta$ need to be initialized.
In most cases, the bias term is commonly initialized to zero. There are several effective methods available for initializing weight matrices, such as Glorot initialization and He initialization \cite{glorot,he}, which help to address the issue of improper initialization and can improve the performance and convergence of neural networks.

\section{Methodology and applications}

\subsection{Methodology: the IINN framework}

When we try to apply the PINNs~\cite{pinn} to compute the solitary wave solutions of Eq.~(\ref{L0}), the learned results are not satisfactory.
Especially, if we directly apply Eq.~(\ref{Loss0}) as the loss function, the PINNs often converges to a trivial solution.
Even if the network converges to a non-trivial solution, that solution may not be what we desire because the same equation (\ref{Loss0}) can admit different states.
Inspired by traditional numerical iteration methods, we propose the following initial value iterative neural network (IINN) algorithm to solve this problem and provide corresponding theoretical guarantees.

In the following, we will introduce the main idea of IINN method.
Two identical fully connected neural networks $\mathrm{NN}_1$ and $\mathrm{NN}_2$, defined by Eq.~(\ref{uu}), are employed to learn the desired solution $\mathbf{u}^*$.

{\it \bf NN$_1$}---First, we choose an appropriate initial value $\mathbf{u}_0$ such that it is sufficiently close to $\mathbf{u}^*$. Then we randomly select $N$ training points
$\{\mathbf{x}_i\}_{i=1}^{N}$ within the region $\Omega$ and train the network parameters $\theta$ by minimizing the mean squared error loss $\mathcal{L}_1$, aiming to make the output of $\mathrm{NN}_1$ $\bar{\mathbf{u}}$ sufficiently close to initial value $\mathbf{u}_0$, where loss function $\mathcal{L}_1$ is defined as follows
\begin{equation}\label{Loss1}
  \mathcal{L}_1:=\frac{1}{N}\|\bar{\mathbf{u}}-\mathbf{u}_0\|_2^2=\frac{1}{N}\sum_{i=1}^{N}|\bar{\mathbf{u}}(\mathbf{x}_i)-\mathbf{u}_0(\mathbf{x}_i)|^2.
\end{equation}

{\it \bf NN$_2$}---Then, we initialize the parameters $\theta$ of $\mathrm{NN}_2$ with the learned weights and biases from $\mathrm{NN}_1$,
that is
\begin{equation}\label{argmin}
  \theta_0=\mathrm{argmin}\, \mathcal{L}_1(\theta).
\end{equation}
For the output of $\mathrm{NN}_2$ $\hat{\mathbf{u}}$,
we define the loss function $\mathcal{L}_2$ as follows and utilize SGD or Adam optimizer~\cite{adam} to minimize it.
\begin{equation}\label{Loss2}
  \mathcal{L}_2:=\frac{1}{N}\frac{\|\mathbf{L_0}\hat{\mathbf{u}}\|_2^2}{\max(|\hat{\mathbf{u}}|)}=\frac{1}{N}\frac{\sum_{i=1}^{N}|\mathbf{L_0}\hat{\mathbf{u}}(\mathbf{x}_i)|^2}{\max_i(|\hat{\mathbf{u}}(\mathbf{x}_i)|)}.
\end{equation}
It should be noted that $\mathcal{L}_2$ is different from the loss function $\mathcal{L}_0$ defined in PINNs. Here we are not taking boundaries into consideration, instead we incorporate $\max(|\hat{\mathbf{u}}|)$ to ensure that $\hat{\mathbf{u}}$ does not converge to trivial solution.

Based on the above introduction, the framework of IINN algorithm is summarized in Alg.~\ref{alg1}. Meanwhile, to provide clarity, Fig.~\ref{net} shows the schematic representation of the IINN method.

\begin{algorithm}
    \caption{The framework of initial value iterative neural network (IINN)}\label{alg1}
    \begin{algorithmic}
        \REQUIRE Operator $\mathbf{L_0}$ in (\ref{L0}); initial state $\mathbf{u}_0$; error threshold $\varepsilon_1$ and $\varepsilon_2$; training data $\{\mathbf{x}_i,\mathbf{u}_0(\mathbf{x}_i)\}_{i=1}^{N}$; learning rate $\alpha$, maximum iteration number $K$.
        \ENSURE Output $\hat{\mathbf{u}}$.
        \STATE For $\mathrm{NN}_1$, randomly initialize the parameters $\theta_0$ s.t. they satisfy the normal distribution. For network output $\bar{\mathbf{u}}(\mathbf{x},\theta_0)$, $\mathcal{L}_1:=\frac{1}{N}\|\bar{\mathbf{u}}-\mathbf{u}_0\|_2^2$.
    \FOR{$k=0:K$}
        \IF{$\mathcal{L}_1(\theta_k) < \varepsilon_1$}
            \STATE $\theta^*=\theta_k$;
            \STATE break;
        \ELSE
            \STATE Apply the Adam optimizer update parameters $\theta_{k}$;
        \ENDIF
    \ENDFOR
    \STATE For $\mathrm{NN}_2$, initialize the parameters $\theta_0=\theta^*$ and set $k=0$. For network output $\hat{\mathbf{u}}(\mathbf{x},\theta_0)$, $\mathcal{L}_2:=\frac{1}{N}\frac{\|\mathbf{L_0}\hat{\mathbf{u}}\|_2^2}{\max(|\hat{\mathbf{u}}|)}$.
    \WHILE{$\mathcal{L}_2 \geq \varepsilon_2$}
        \STATE $\theta_{k+1}=\theta_k-\alpha \nabla\mathcal{L}_2$;
        \STATE $k=k+1$;
    \ENDWHILE
    \end{algorithmic}
\end{algorithm}

\begin{figure*}[!t]
    \centering
%\vspace{-0.15in}
  {\scalebox{0.67}[0.67]{\includegraphics{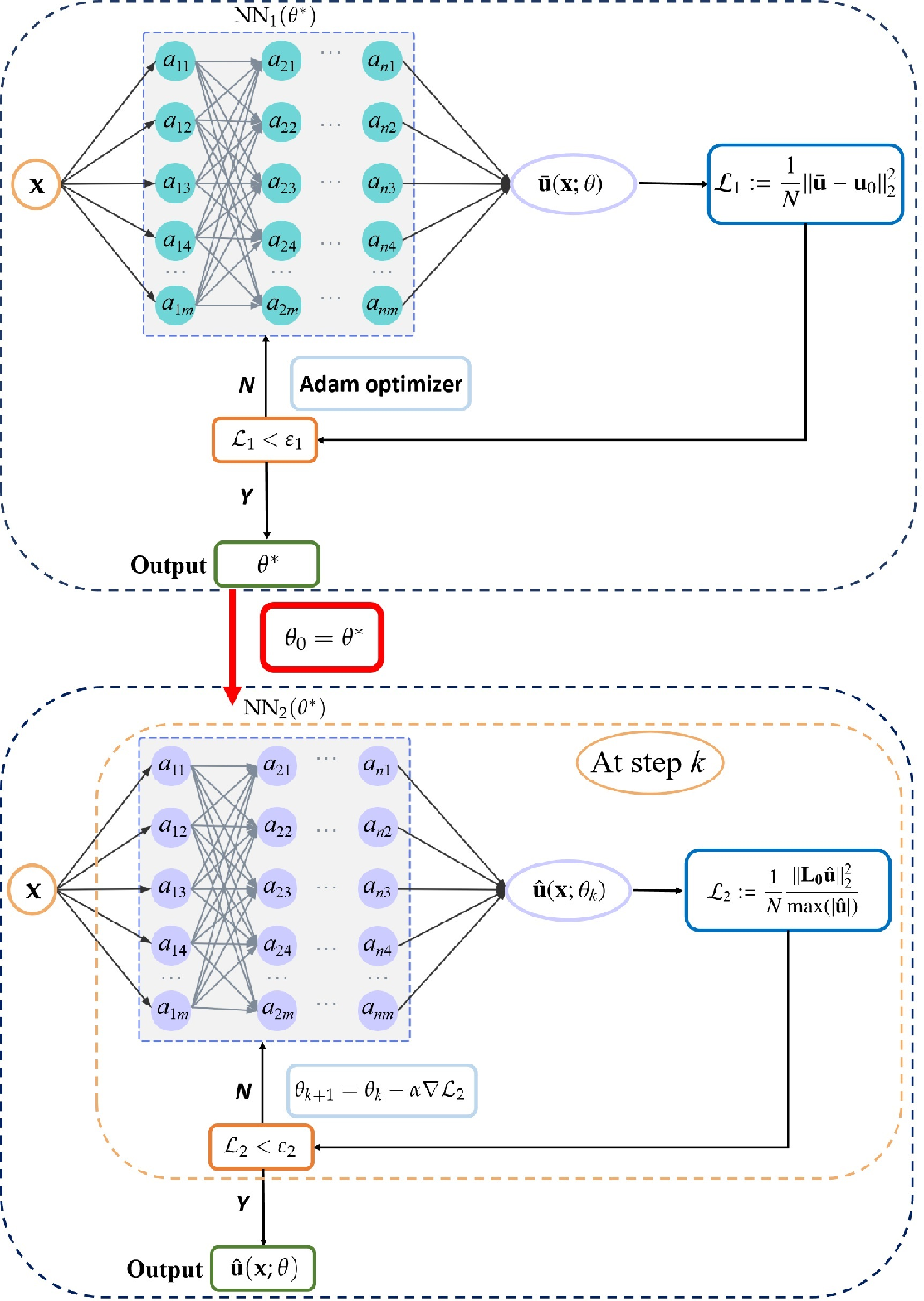}}}
  \vspace{0.1in}
\caption{\small \rm The schematic diagram of IINN method.}
  \label{net}
\end{figure*}

\begin{remark}
The core idea of the IINN algorithm is to initialize the parameters of $\mathrm{NN}_2$ with the learned weights and biases from $\mathrm{NN}_1$,
that is, $\theta_0=\mathrm{argmin}\, \mathcal{L}_1(\theta)$.
From the perspective of numerical iteration, for $\mathrm{NN}_2$, we iterate the network parameters with $\hat{\mathbf{u}}(\theta_0)$ as the initial value, such that $\hat{\mathbf{u}}$ satisfies the Eq.~(\ref{L0}) by minimizing loss function $\mathcal{L}_2$.
From a machine learning perspective, the approach is known as transfer learning, where knowledge gained from training one model is transferred to another model, typically when the two models have similar tasks or domains. By initializing $\mathrm{NN}_2$ with the parameters of $\mathrm{NN}_1$, we can leverage the pre-trained model’s learned representations and potentially achieve better performance, especially if the new task or data is related to the original task or data on which $\mathrm{NN}_1$ is trained.
\end{remark}

\begin{remark}
In the scheme of IINN, the choice of initial value $\mathbf{u}_0$ is crucial as it determines the type of solution we ultimately obtain.
There are some techniques for selecting initial states.
Usually, based on the characteristics of the system and our understanding of the system, we can estimate the initial value using physical background knowledge or past experience.
For example, 1D NLS admits the $\mathrm{sech}$-type soliton solution. Therefore, we can take $u_0(x)=A\mathrm{sech}(x)$, then by adjusting the coefficient $A$ to make $|Lu_0|$ smaller than a certain threshold. Furthermore, external potentials play a crucial role in solitons shaping and solitons management. For example, for the 1D NLS with harmonic-oscillator (HO) trapping potential, the exact solution is close to $\exp(-x^2)$. Thus, we choose $u_0=A\exp(-x^2)$ as initial state and adjust $A$ to make $u_0$ close to exact solution sufficiently.
More generally, we can obtain the initial conditions by computing the spectra and eigenmodes in the linear regime. Taking Eq.~(\ref{lu}) as an example, when the power of $u$ is small (that is $\|u\|_2^2$ is small), we can consider that the solutions of Eq.~(\ref{lu}) is the eigenmodes of the linear problem (Eq.~(\ref{lu}) in the absence of the nonlinearity $\mathcal{N}(\mathbf{x}, |u|^2)$). Then solutions to Eq.~(\ref{lu}) for values of the propagation constant $\mu$ taken in a vicinity of linear eigenvalues may be approximated by eigenfunctions.
\end{remark}

In the following, the corresponding theoretical evidence is provided to guarantee the effectiveness of IINN method.
First, a lemma is presented as follows.
\begin{lemma}\label{lemma1}
  Let $\Lambda=\bigcup_{i=1}^{N} \Lambda_i$, where $\Lambda_i=\left\{\theta_i\,|\, \mathbf{L_0}\hat{\mathbf{u}}(\theta_i)=0\right\}$ and for any $\theta_i^m$, $\theta_i^n\in\Lambda_i$,
  $\|\hat{\mathbf{u}}(\theta_i^m)-\hat{\mathbf{u}}(\theta_i^n)\|_2=0$,
  and $N$ is the number of distinct solitary wave solutions. Then, $\theta_i\in\Lambda_i$ is isolated in $\Lambda_j$ for $i\neq j$.
\end{lemma}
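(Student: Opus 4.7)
The plan is to argue by continuity of the network output as a function of its parameters. The neural network $\hat{\mathbf{u}}(\,\cdot\,;\theta)$ defined by (\ref{uu}) is an alternating composition of affine maps and the smooth activation $\sigma=\tanh$, so the map $\theta\mapsto \hat{\mathbf{u}}(\,\cdot\,;\theta)$ is continuous (indeed smooth) into the space where the $L_2$ norm on $\Omega$ is evaluated. I will exploit this continuity together with the fact that the $N$ solitary waves indexed by $i=1,\dots,N$ are, by assumption, distinct.

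First I would fix $\theta_i\in\Lambda_i$ and any $j\neq i$. Because $\Lambda_i$ and $\Lambda_j$ are equivalence classes of parameters producing the same network output, there are well-defined functions $\mathbf{u}_i^{*}$ and $\mathbf{u}_j^{*}$ with $\hat{\mathbf{u}}(\theta)=\mathbf{u}_i^{*}$ for all $\theta\in\Lambda_i$ and $\hat{\mathbf{u}}(\theta)=\mathbf{u}_j^{*}$ for all $\theta\in\Lambda_j$. Since $\mathbf{u}_i^{*}$ and $\mathbf{u}_j^{*}$ represent distinct solitary wave solutions, the quantity
\begin{equation}
\delta_{ij}:=\|\mathbf{u}_i^{*}-\mathbf{u}_j^{*}\|_2
\end{equation}
is strictly positive. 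By continuity of $\theta\mapsto\hat{\mathbf{u}}(\,\cdot\,;\theta)$ at $\theta_i$, there exists a neighborhood $U\subseteq\mathbb{R}^{\dim\theta}$ of $\theta_i$ such that
\begin{equation}
\|\hat{\mathbf{u}}(\,\cdot\,;\theta)-\hat{\mathbf{u}}(\,\cdot\,;\theta_i)\|_2=\|\hat{\mathbf{u}}(\,\cdot\,;\theta)-\mathbf{u}_i^{*}\|_2<\tfrac{\delta_{ij}}{2}\quad\text{for all } \theta\in U.
\end{equation}

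For any $\theta\in U\cap\Lambda_j$, the triangle inequality would force $\|\mathbf{u}_i^{*}-\mathbf{u}_j^{*}\|_2<\delta_{ij}/2$, contradicting the definition of $\delta_{ij}$. Therefore $U\cap\Lambda_j=\emptyset$, which is exactly the definition of $\theta_i$ being isolated in $\Lambda_j$. Since $j\neq i$ was arbitrary (and there are only finitely many indices), one can, if desired, take a finite intersection of such neighborhoods to isolate $\theta_i$ simultaneously from all $\Lambda_j$ with $j\neq i$.

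I do not anticipate a genuine obstacle: the only nontrivial ingredient is the continuity of $\hat{\mathbf{u}}$ with respect to $\theta$, which is immediate from the explicit composition (\ref{uu}) together with smoothness of $\tanh$. The closest thing to a subtlety is making sure the continuity statement is applied in the same $L_2$ norm used to declare the solutions distinct, but this is handled simply by fixing the same finite set of collocation points (or the same domain $\Omega$) throughout, as in the definitions preceding the lemma.
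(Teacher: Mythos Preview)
Your argument is correct and follows essentially the same route as the paper: both use continuity of $\theta\mapsto\hat{\mathbf{u}}(\,\cdot\,;\theta)$ from the composition (\ref{uu}), take a positive gap between the distinct solitary-wave outputs, and choose a neighborhood on which the output stays within half that gap to force $U\cap\Lambda_j=\emptyset$. The only cosmetic difference is that the paper defines the gap as a minimum over all $j\neq i$ at once, whereas you fix one $j$ and then intersect neighborhoods, which amounts to the same thing.
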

\begin{proof}
  First according to definition (\ref{uu}) of $\mathrm{NN}_2$, $\hat{\mathbf{u}}(\theta)$ is continuous with respect to $\theta$. For $\theta_i\in\Lambda_i$, we denote
  \bee
  d=\min_{\theta_j\in\Lambda_j,j\neq i}\|\hat{\mathbf{u}}(\theta_j)-\hat{\mathbf{u}}(\theta_i)\|_2.
  \ene
  It is obvious that $d>0$.
  Considering the continuity of $\hat{\mathbf{u}}$, there exists a neighborhood $U$ around $\theta_i$, such that any $\theta\in U$, $p(\hat{\mathbf{u}}(\theta),\hat{\mathbf{u}}(\theta_i))<d/2$.
  Therefore, $U\cap \Lambda_j=\oslash$ for $j\neq i$. The proof is completed.
\end{proof}

According to the stable manifold theorem from dynamical systems theory \cite{shub}, we provide the following important theorem, and its proof can be found in Ref.~\cite{the1}.
\begin{theorem}\cite{the1}
 If $f$ is a $\mathcal{C}^2$ function   and has Lipschitz continuity as defined in Definition 1,  and $\theta^*$ be a strict saddle. Assume that learning rate $0<\alpha< \frac{1}{\rho}$, then
  \begin{equation}
    \mathrm{Pr}\left(\lim_k\theta_k=\theta^*\right)=0.
  \end{equation}
\end{theorem}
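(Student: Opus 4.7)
The plan is to recast one step of gradient descent as a smooth map $g(\theta) = \theta - \alpha \nabla f(\theta)$ on $\mathbb{R}^n$, verify that $g$ is a $\mathcal{C}^1$ diffeomorphism under the assumption $\alpha < 1/\rho$, and then apply the (center-)stable manifold theorem at the strict saddle $\theta^*$ to conclude that the basin of attraction of $\theta^*$ is a Lebesgue-null set. Since the normal initialization prescribed in Alg.~\ref{alg1} is absolutely continuous with respect to Lebesgue measure, the probabilistic conclusion $\mathrm{Pr}(\lim_k \theta_k = \theta^*) = 0$ follows immediately.

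First I would check that $g$ is a diffeomorphism. Its Jacobian is $Dg(\theta) = I - \alpha H(f)(\theta)$. Lipschitz continuity of $\nabla f$ with constant $\rho$ gives $\|H(f)(\theta)\|_2 \le \rho$, so every eigenvalue of $\alpha H(f)(\theta)$ lies in $(-\alpha\rho, \alpha\rho) \subset (-1,1)$. Hence $Dg(\theta)$ is invertible everywhere, and by the inverse function theorem $g$ is a local diffeomorphism. Injectivity follows from the standard estimate: for $\theta \ne \theta'$, the mean value theorem yields $\|g(\theta) - g(\theta')\|_2 \ge (1-\alpha\rho)\|\theta-\theta'\|_2 > 0$. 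Smoothness of $g^{-1}$ on its image then follows from $f \in \mathcal{C}^2$.

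Next I would study the linearization at $\theta^*$. Because $\theta^*$ is a strict saddle, $\lambda_{\min}(H(f)(\theta^*)) < 0$, so $Dg(\theta^*) = I - \alpha H(f)(\theta^*)$ has at least one eigenvalue $1 - \alpha \lambda_{\min}(H(f)(\theta^*)) > 1$. Thus $\theta^*$ is a hyperbolic fixed point of $g$ with a nontrivial unstable subspace. The center-stable manifold theorem then provides a neighborhood $U$ of $\theta^*$ and a $\mathcal{C}^1$ embedded submanifold $W^{cs}_{\mathrm{loc}}(\theta^*) \subset U$ of dimension at most $n-1$ such that any forward orbit of $g$ that stays in $U$ for all time lies entirely in $W^{cs}_{\mathrm{loc}}(\theta^*)$. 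In particular, every trajectory converging to $\theta^*$ must eventually enter $U$ and stay there, so its tail lies in $W^{cs}_{\mathrm{loc}}(\theta^*)$.

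To globalize, observe that the set of initializations whose orbits converge to $\theta^*$ is contained in
\begin{equation}
\mathcal{S} = \bigcup_{k \ge 0} g^{-k}\bigl(W^{cs}_{\mathrm{loc}}(\theta^*)\bigr).
\end{equation}
A $\mathcal{C}^1$ submanifold of codimension at least one has Lebesgue measure zero, and since $g$ is a diffeomorphism each preimage $g^{-k}(W^{cs}_{\mathrm{loc}}(\theta^*))$ is again a $\mathcal{C}^1$ submanifold of the same codimension, hence also null. A countable union of null sets is null, so $\mathcal{S}$ has measure zero, and the Gaussian initialization assigns probability zero to $\mathcal{S}$. The main obstacles I expect are the global injectivity of $g$ (which is exactly where the hypothesis $\alpha < 1/\rho$ is used) and the transition from the local stable manifold to the global null set via countable preimages; once these are in place the result is essentially a packaging of standard smooth-dynamics machinery.
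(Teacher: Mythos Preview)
Your proposal is correct and is precisely the stable-manifold argument of Lee--Simchowitz--Jordan--Recht that the paper cites as \cite{the1}; the paper does not give its own proof but simply refers the reader to that reference. The only minor remark is that you should note $\theta^*$ is a fixed point of $g$ (i.e., $\nabla f(\theta^*)=0$, which is implicit in the definition of a saddle) before invoking the center-stable manifold theorem there.
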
\label{theorem1}

Finally, with the aid of above analysis, we provide the following theorem that guarantees the effectiveness of IINN method.

\begin{theorem}
  For a given soliton state $\mathbf{u}^*$, suppose that the initial state $\mathbf{u}_0$ is sufficiently close to $\mathbf{u}^*$, and the output of $\mathrm{NN}_1$ $\bar{\mathbf{u}}(\theta^*)=\mathbf{u}_0$.  And $\theta^*$ satisfies $d(\theta^*,\Lambda_i)<d(\theta^*,\Lambda_j)$ $(j\neq i)$,
  where $i$ satisfies $\hat{\mathbf{u}}(\theta_i)=\mathbf{u}^*$. Then, the output of $\mathrm{NN}_2$ $\hat{\mathbf{u}}$ is sufficiently close to $\mathbf{u}^*$, for sufficiently small learning rate $\alpha$.
\end{theorem}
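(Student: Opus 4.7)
The plan is to combine Lemma~1 with the strict-saddle avoidance result of Theorem~1 and the continuity of $\hat{\mathbf{u}}$ in $\theta$ to show that the gradient descent iterates in the NN$_2$ stage of Alg.~\ref{alg1} converge to a point in $\Lambda_i$, so that the limiting output equals $\mathbf{u}^*$. The first observation I would record is that $\mathcal{L}_2\ge 0$ vanishes precisely on $\Lambda=\bigcup_i\Lambda_i$, so the global minima of $\mathcal{L}_2$ are exactly the components $\Lambda_i$, and any other critical point is either a strictly positive local minimum or a saddle. Under the standing assumption that such non-global critical points are strict saddles in the sense of Definition~3, Theorem~1 applies to the iteration $\theta_{k+1}=\theta_k-\alpha\nabla\mathcal{L}_2(\theta_k)$ as soon as $\alpha<1/\rho$.

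Next I would use the hypotheses to confine the initialization $\theta_0=\theta^*$ to a neighborhood of $\Lambda_i$ that is disjoint from every other $\Lambda_j$. Since $\mathbf{u}_0$ is sufficiently close to $\mathbf{u}^*$ and $\bar{\mathbf{u}}(\theta^*)=\mathbf{u}_0$, continuity of $\theta\mapsto\hat{\mathbf{u}}(\theta)$ together with the strict inequality $d(\theta^*,\Lambda_i)<d(\theta^*,\Lambda_j)$ $(j\neq i)$ places $\theta^*$ in the open neighborhood $U$ of a point of $\Lambda_i$ provided by Lemma~1, which meets no other $\Lambda_j$. Using smoothness of $\mathcal{L}_2$, I would then pick a sublevel set $\{\theta:\mathcal{L}_2(\theta)\le c\}$ containing $\theta^*$ whose connected component through $\theta^*$ is contained in $U$ and hence meets $\Lambda$ only inside $\Lambda_i$; this connected component plays the role of a basin of attraction for $\Lambda_i$.

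With $\alpha<1/\rho$ the standard descent lemma yields $\mathcal{L}_2(\theta_{k+1})\le\mathcal{L}_2(\theta_k)$, so the iterates cannot leave that connected component and in particular cannot reach any $\Lambda_j$ with $j\neq i$. The iterates are therefore bounded, the loss is monotone and bounded below, $\nabla\mathcal{L}_2(\theta_k)\to 0$, and every accumulation point $\theta^\infty$ is a critical point of $\mathcal{L}_2$. Theorem~1 rules out strict saddles almost surely (any residual saddle risk can be absorbed by an arbitrarily small perturbation of $\theta^*$), so $\theta^\infty\in\Lambda$, and the basin argument forces $\theta^\infty\in\Lambda_i$. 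Finally Lemma~1 states $\hat{\mathbf{u}}(\eta)=\mathbf{u}^*$ for every $\eta\in\Lambda_i$, hence $\hat{\mathbf{u}}(\theta^\infty)=\mathbf{u}^*$, and continuity of $\hat{\mathbf{u}}$ in $\theta$ gives $\|\hat{\mathbf{u}}(\theta_k)-\mathbf{u}^*\|_2\to 0$, as required.

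The main obstacle will be the basin construction in the second paragraph: the theorem supplies only a pointwise distance comparison at $\theta^*$, whereas the argument really needs a full neighborhood along which descent iterates remain trapped. Bridging this gap requires the closeness of $\mathbf{u}_0$ to $\mathbf{u}^*$, the strictness of the distance inequality, and the smallness of $\alpha$ to be quantified in a mutually compatible way; this is precisely where the informal ``sufficiently close'' phrases in the hypotheses are doing real work, and a fully rigorous statement would need to translate them into explicit inequalities involving the Lipschitz constant $\rho$ and the geometry of the components $\Lambda_i$.
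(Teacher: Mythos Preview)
Your proposal follows essentially the same route as the paper: use Lemma~1 to place $\theta^*$ in a neighborhood of $\Lambda_i$ disjoint from the other $\Lambda_j$, then invoke Theorem~1 to exclude strict saddles and conclude that the gradient iterates of $\mathrm{NN}_2$ converge to some $\theta_i\in\Lambda_i$, whence $\hat{\mathbf{u}}(\theta_i)=\mathbf{u}^*$. Your sublevel-set basin argument and your frank identification of its gap actually go beyond the paper's own proof, which simply asserts that ``by SGD algorithm and Theorem~1 \ldots\ $\theta$ almost surely converge to certain $\theta_i\in\Lambda_i$'' without supplying any mechanism to keep the iterates from leaving the neighborhood furnished by Lemma~1; the weakness you flagged is present in the paper as well.
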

\begin{proof}
  First, according to Lemma \ref{lemma1}, for any $\theta_i^m\in\Lambda_i$, there exists a neighborhood $U_m$ around $\theta_i^m$, such that for any $j\neq i$, $U_m\cap \Lambda_j=\oslash$.
  Since $\mathbf{u}_0$ is sufficiently close to $\mathbf{u}^*$ and $\theta^*$ satisfies $d(\theta^*,\Lambda_i)<d(\theta^*,\Lambda_j)$ $(j\neq i)$, then $\theta^*\in\cup U_m$.
   Then by SGD algorithm and Theorem 1, for $\mathrm{NN}_2$, $\theta$ almost surely converge to certain $\theta_i\in\Lambda_i$ for sufficiently small learning rate $\alpha$.
   Therefore, the output of $\mathrm{NN}_2$ $\hat{\mathbf{u}}(\theta_i)$ is sufficiently close to $\mathbf{u}^*$.
\end{proof}

\begin{remark}
In practical applications, we often use the Adam optimizer instead of SGD optimizer. The Adam optimizer introduces a momentum term, which accelerates the parameter update process and helps escape local minima, making it more likely to find better convergence points.
\end{remark}

In order to evaluate the performance of the IINN method, we introduce the relative $L_2$ error $E_1$ between the exact solution $\mathbf{u}^*$ and the learned one $\hat{\mathbf{u}}$ on $\mathbf{x}$ grids, where
\begin{equation}\label{E1}
  E_1=\frac{\|\hat{\mathbf{u}}(\mathbf{x})-\mathbf{u}^*(\mathbf{x})\|_2}{\|\mathbf{u}^*(\mathbf{x})\|_2}.
\end{equation}

\subsection{Examples}

In this section, we will demonstrate the performances of the IINN method by applying it to various examples of solitary wave computations.
For the following example, if not otherwise specified, we choose a 4-hidden-layer deep neural network with 100 neurons per layer, and set learning rate $\alpha=0.0001$.
In the case of certain specific systems, we can find exact soliton solutions, which will serve as a benchmark to evaluate the performance of the network by calculating $E_1$.
For general cases, we utilize numerical methods such as the
Newton-conjugate-gradient (NCG) method \cite{NCG} to obtain high-precision approximate solutions, which serve as a reference for comparison.

Generally speaking, due to the automatic differentiation algorithm, training $\mathrm{NN}_2$ usually takes much more time than training $\mathrm{NN}_1$, especially for high-order equations and high-dimensional systems.
Therefore, to ensure convergence speed, the number of training iterations for $\mathrm{NN}_1$ needs to be large enough, or the threshold $\varepsilon_1$ needs to be small enough.
In the following, we set error threshold $\varepsilon_1$=1e-07.
Finally, we should mention that all computations are performed by using a Lenovo notebook with a 2.30GHz eight-cores i7 processor and a RTX3080 graphics processor.

%\subsubsection{1D NLS equation with Kerr nonlinearity term}

\v \noindent {\bf Example 3.1} (Solitons of the 1D NLS equation with Kerr nonlinearity). The first example we consider is the 1D NLS equation with Kerr nonlinearity (where $\mathcal{N}(x,|U|^2)U$ in Eq.~(\ref{U}) is taken as the Kerr nonlinear term $g|U|^2U$):
\bee
 iU_t-U_{xx}+V(x)U+g|U|^2U=0,
\ene
where $V(x)$ denotes the potential. The corresponding stationary Eq.~(\ref{lu}) has the following form
\begin{equation}\label{NLS1d}
   L u =0, \quad  L= -\partial_{xx} + V(x) + g|u|^2-\mu.
\end{equation}

In the following, we consider three scenarios: $V = 0$, $V$ taking the form of harmonic-Gaussian (HG) potential and $V$ taking the complex Scarf-II potential.

\v{\it Case 1.}---{\it Bright soliton of the 1D NLS equation with $V=0$ and $g=-1$.} In this case, Eq.~(\ref{NLS1d}) admits the bright soliton as follows
\begin{equation}\label{nls1dsolu}
  u(x)=\sqrt{-2\mu}\,\mathrm{sech}(\sqrt{-\mu}x),\qquad \mu<0.
\end{equation}
Based on IINN method, the initial state is taken
\begin{equation}\label{nlsu0}
  u_0(x)=\mathrm{sech}(x),
\end{equation}
and set $\Omega=[-20,20]$ with $N=500$. Through 10000 steps of iterations, with $\mathrm{NN}_1$ taking 21s and 25000 steps of iterations, with $\mathrm{NN}_2$ taking 199s, the relative $L_2$ error $E_1$=1.637538e-03 compared to the exact solution (\ref{nls1dsolu}) at $\mu=-2$. Figs.~\ref{f-nls}(a1, a2) illustrate the comparison between the learned solutions and exact
solutions at $\mu=-2$ as well as the loss-iteration diagram for $\mathrm{NN}_2$.

\begin{figure*}[!t]
    \centering
%\vspace{-0.15in}
  {\scalebox{0.8}[0.8]{\includegraphics{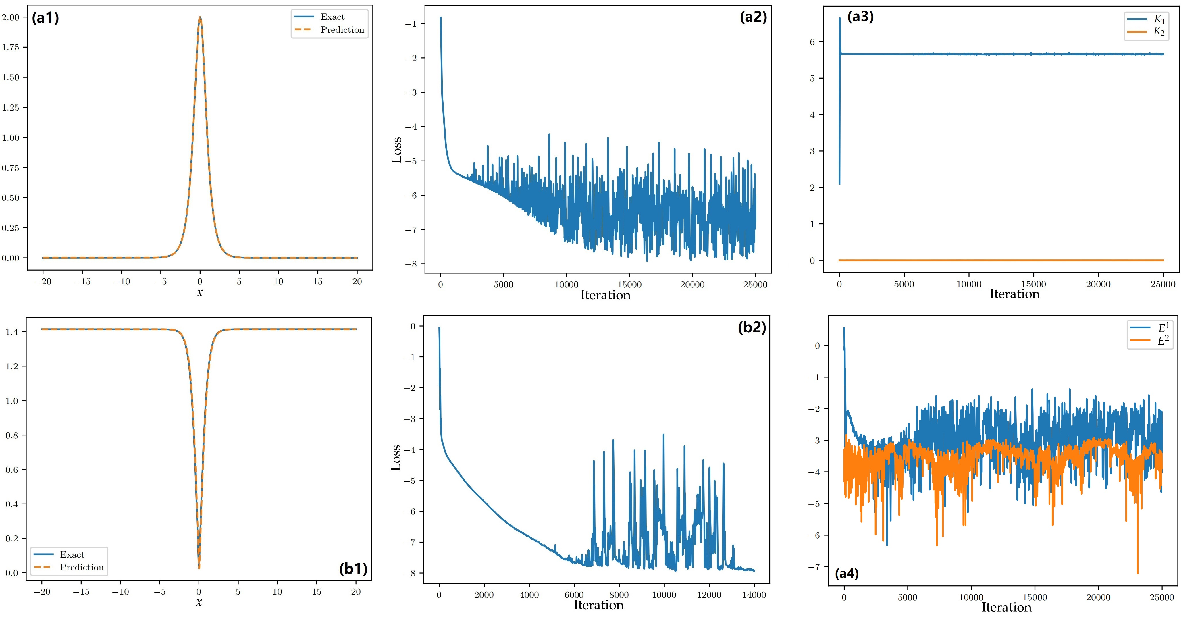}}}
\hspace{-0.4in}\vspace{0.15in}
\vspace{0.1in}\caption{\small \rm The soliton solutions $u(x)$ of 1D NLS equation (\ref{NLS1d}) in free space ($V=0$). (a1) The learned bright soliton solution and exact one at $\mu=-2$ in self-focusing case.
(b1) The learned dark soliton solution and exact one at $\mu=2$ in self-defocusing case.
(a2, b2) The loss-iteration plot of $\mathrm{NN}_2$, where the vertical axis represents $\log_{10}\mathcal{L}_2$ (the same hereinafter).
(a3) The conserved quantities $\int_{\mathbf{R}}\omega\dd x$ versus iteration for bright soliton, where $K_1$ and $K_2$ denote $\omega_1=UU^*$ and $\omega_2=UU_x^*$, respectively.
 (a4) The conserved quantity error versus iteration, where $E^i=\log_{10}|K_i-K_i^*|$ and $K_i^*$ is the true value of conserved quantity.}
  \label{f-nls}
\end{figure*}

Furthermore, considering the integrability of NLS equation, we can verify the accuracy of the solution by checking whether the conserved quantity is constant in every iteration.
A conservation law associated with a differential equation is an expression of the form
\begin{equation}\label{cl}
  \omega_t=J_x,
\end{equation}
where $\omega$ and $J$ are functions of $t$, $x$, $U$ and derivatives of $U$. $\omega$ is called the conserved density and $J$ is called the flux of $\omega$.
The two specific conservation laws for NLS equation are given as follows~\cite{Ablowitz-2}
\begin{equation}\label{nlscl1}
  (UU^*)_t=i(UU_{x}^*-U^*U_x)_x,
\end{equation}
\begin{equation}\label{nlscl2}
  (UU_x^*)_t=i(UU_{xx}^*-U_xU_x^*-\frac{1}{2}gU^2U^{*2})_x,
\end{equation}
Then we can obtain that the conserved quantity is constant when $U\rightarrow 0$ as $|x|\rightarrow\infty$, that is
\begin{equation}\label{cq}
  \frac{\dd }{\dd t}\int_{\mathbf{R}}\omega\dd x=0.
\end{equation}
Since the solution we consider is stationary solution in the form of $U(x,t)=u(x)e^{i\mu t}$, its conserved quantity is invariant over time.
We determine whether the solution converges by examining the change of the conserved quantity during the iteration.
In other words, we detect whether it tends to the true value.
Fig.~\ref{f-nls}(a3) shows these two conserved quantities
\bee
K_1=\int_{\mathbf{R}}\omega\dd x=\int_{\mathbf{R}}UU^*\dd x,\qquad
K_2=\int_{\mathbf{R}}\omega\dd x=\int_{\mathbf{R}}UU_x^*\dd x
\ene
versus iteration for bright soliton, which means the conserved quantities quickly remain the same.
As shown in Fig.~\ref{f-nls}(a4), we also display the variations of the conserved quantity error $E^i$, during the iteration, where $E^i=\log_{10}|K_i-K_i^*|$ and $K_i^*$ is the true value of conserved quantity.
It can be seen that the error tends to be around 1e-3, which is almost of the same order as the above-mentioned relative $L_2$ error $E_1$.

\v{\it Case 2.}---{\it Dark soliton of the 1D NLS equation with $V=0$ and $g=1$.} In this case, Eq.~(\ref{NLS1d}) also has dark soliton solution in self-defocusing case.
\begin{equation}\label{nls1dsolud}
  u(x)=\sqrt{\mu}\,\tanh(\sqrt{\mu/2}x),\qquad \mu>0.
\end{equation}
%where $g=1$ and $\mu>0$
Although the $|u|\rightarrow A $ as $|x|\rightarrow\infty$, the IINN method is still valid.
Based on the IINN method, we take $\Omega=[-20,20]$ with $N=500$, and take the initial value as
\begin{equation}\label{darku0}
  u_0=\tanh(x).
\end{equation}
Then the learned dark soliton solution can be obtained at $\mu=2$ as shown in Fig.~\ref{f-nls}(b1), after 10000 steps of iterations with $\mathrm{NN}_1$ taking 21s and 14000 steps of iterations with $\mathrm{NN}_2$ taking 110s. The relative $L_2$ error $E_1$=3.408001e-04 compared to the exact solution (\ref{nls1dsolud}) at $\mu=2$. The loss-iteration plot of $\mathrm{NN}_2$ for dark soliton is displayed in Fig.~\ref{f-nls}(b2).

\begin{figure*}[!t]
    \centering
%\vspace{-0.15in}
  {\scalebox{0.85}[0.85]{\includegraphics{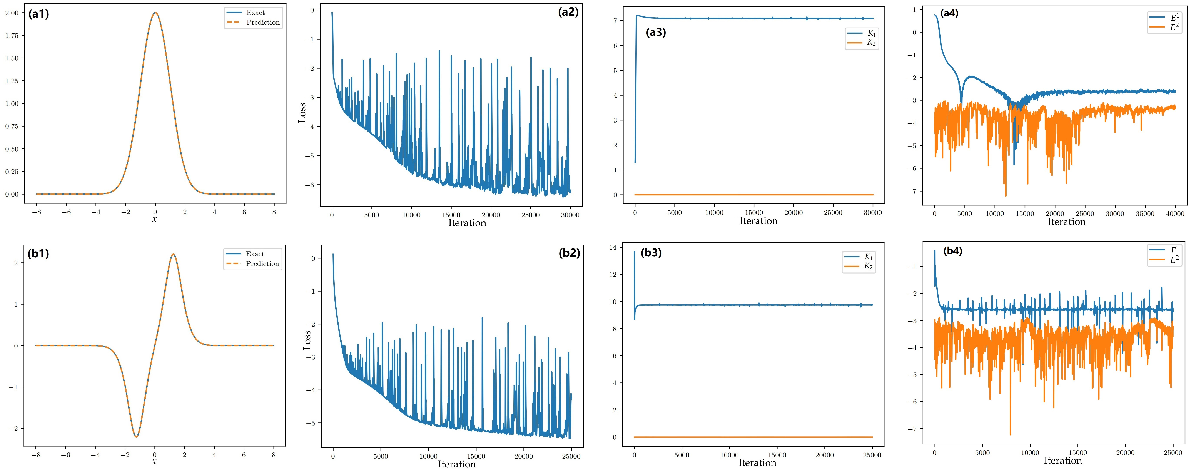}}}
  \hspace{-0.4in}\vspace{0.15in}
\vspace{0.1in}\caption{\small \rm The ground state and dipole mode of 1D NLS equation (\ref{NLS1d}) with HG potential (\ref{HO}). (a1) The learned and exact ground state solution at $V_0=-4$, $\mu=1$ and $g=-1$. (b1) The learned and exact dipole mode at $V_0=-4$, $\mu=1$ and $g=-1$.  (a2, b2) The loss-iteration plot of $\mathrm{NN}_2$ for ground state and dipole mode, respectively.
(a3, b3) The conserved quantities $\int_{\mathbf{R}}\omega\dd x$ versus iteration for ground state and dipole mode respectively, where $K_1$ and $K_2$ denote $\omega_1=UU^*$ and $\omega_2=UU_x^*$, respectively.
 (a4, b4) The conserved quantity error versus iteration for ground state and dipole mode respectively, where $E^i=\log_{10}|K_i-K_i^*|$ and $K_i^*$ is the true value of conserved quantity.}
  \label{f-ho}
\end{figure*}

\v{\it Case 3.}---{\it Ground state and dipole mode (excited state) of the 1D NLS equation with harmonic-Gaussian (HG) potential.}
%When $V\neq0$, Eq.~(\ref{NLS1d}) admits solitary waves of various forms.
If we consider the potential $V(x)$ as the HG potential
\begin{equation}\label{HO}
  V(x)=x^2-V_0e^{-x^2},\quad V_0\in\mathbb{R},
\end{equation}
then the exact ground state solution of Eq.~(\ref{NLS1d}) can be found  under self-focusing $(g=-1)$ and self-defocusing $(g=1)$ nonlinearity
\begin{equation}\label{hg1dsolu}
  u(x)=\sqrt{V_0/g}\,e^{-x^2/2},\qquad \mu=1,\qquad V_0/g>0.
\end{equation}

Using the IINN method, we set $\Omega=[-8,8]$ with $N=200$, and take the initial value as
\begin{equation}\label{hgu0}
  u_0(x)=\exp(-x^2).
\end{equation}
Then the learned ground state solution can be obtained at $V_0=-4$ in self-focusing $(g=-1)$ case as shown in Fig.~\ref{f-ho}(a1), after 5000 steps of iterations with $\mathrm{NN}_1$ taking 11s and 30000 steps of iterations with $\mathrm{NN}_2$ taking 238s. The relative $L_2$ error $E_1$=2.511470e-04 compared to the exact solution (\ref{hg1dsolu}) at $V_0=-4$. The loss-iteration plot of $\mathrm{NN}_2$ for ground state is displayed in Fig.~\ref{f-ho}(a2).

Furthermore, in the self-focusing $(g=-1)$ case, Eq.~(\ref{NLS1d}) with HG potential (\ref{HO}) admits the dipole mode, while the exact expression for the solution has not been found yet. Therefore, we utilize the NCG methods to obtain high-precision approximate solutions, which can be referred to as the `exact' solution for comparison purposes.
The derivative discretization scheme is Fourier spectral method \cite{spectral} with the 256 Fourier modes. The computational domain is  discretized by 256 points along each dimension.

Then we choose the initial value as
\begin{equation}\label{hgeu0}
  u_0(x)=4x\exp(-x^2/2).
\end{equation}
Through the IINN method, the learned dipole mode can be obtained at $V_0=-4$ in self-focusing $(g=-1)$ case as shown in Fig.~\ref{f-ho}(b1), after 10000 steps of iterations with $\mathrm{NN}_1$ taking 24s and 25000 steps of iterations with $\mathrm{NN}_2$ taking 201s. The relative $L_2$ error is $E_1$=4.664602e-04. The loss-iteration plot of $\mathrm{NN}_2$ for dipole mode is displayed in Fig.~\ref{f-ho}(b2).
Notice that for the real potential $V(x)$, although the equation is non-integrable, the solution still satisfies the conservation laws (\ref{nlscl1}) and (\ref{nlscl2}); the conserved quantities $\int_{\mathbf{R}}\omega\dd x$ versus iteration for ground state and dipole mode are shown in Figs.~\ref{f-ho}(a3, b3), respectively.
And the variations of the conserved quantity error $E^i$, during the iteration for ground state and dipole mode, are displayed in Figs.~\ref{f-ho}(a4, b4), respectively,  where $E^i=\log_{10}|K_i-K_i^*|$ and $K_i^*$ is the true value of conserved quantity.

On the other hand, we can use the eigenmode in the linear regime as the initial value. By the spectral method \cite{spectral}, we can compute the linear eigenvalue problem of Eq.~(16) in the absence of the nonlinearity.
Since the dipole mode originates from the first excited state, we take the first excited state $\psi$ with eigenvalue $\lambda\thickapprox 4.195$ and set $u_{0e}=A\psi$ as the initial value.
Here we take $A=8$ such that $|Lu_{0e}|$ small enough. The initial value $u_0$ (29) and $u_{0e}$ are exhibited in Fig.~\ref{f-hoeig}(a1).
Through the IINN method, the learned dipole mode is shown in Fig.~\ref{f-hoeig}(a2), after 10000 steps of iterations with $\mathrm{NN}_1$ taking 24s and 25000 steps of iterations with $\mathrm{NN}_2$ taking 199s. The relative $L_2$ error is $E_1$=2.966851e-04. And the loss-iteration plot of $\mathrm{NN}_2$ for dipole mode is displayed in Fig.~\ref{f-hoeig}(a3).

\begin{figure*}[!t]
    \centering
%\vspace{-0.15in}
  {\scalebox{0.75}[0.75]{\includegraphics{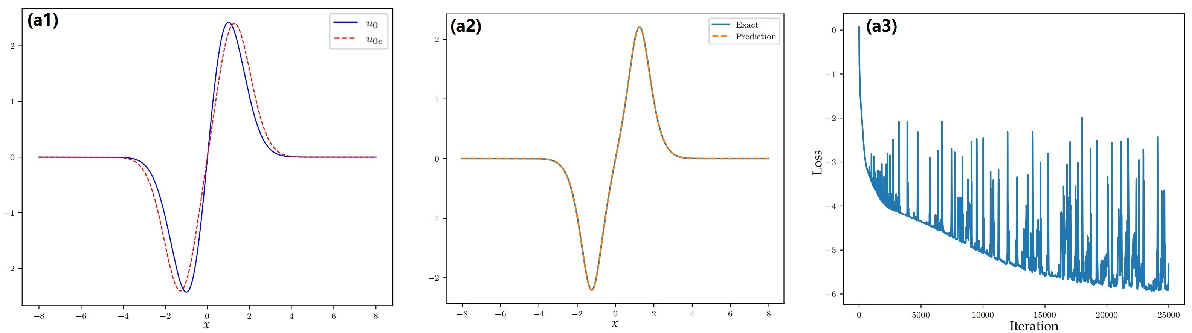}}}
  \hspace{-0.4in}\vspace{0.15in}
\vspace{0.1in}\caption{\small \rm The dipole mode of 1D NLS equation (\ref{NLS1d}) with HG potential (\ref{HO}) using the eigenmodes in the linear regime as the initial value.
(a1) The initial value $u_0$ (\ref{hgeu0}) and $u_{0e}=A\psi$ with the first excited state $\psi$ in the linear regime, where $A=8$. (a2) The learned and exact dipole mode at $V_0=-4$, $\mu=1$ and $g=-1$.
(a3) The loss-iteration plot of $\mathrm{NN}_2$.}
  \label{f-hoeig}
\end{figure*}

\begin{remark}
  It should be noted that the same equation (\ref{NLS1d}) with HG potential (\ref{HO}) has both ground state and dipole solutions.
  Applying the original PINNs method with random initialization method, such as Glorot initialization and He initialization \cite{glorot,he}, it is impossible to obtain two solutions of different forms simultaneously. The network converges to the trivial solution with large probability.
  In the case of multiple solutions, we cannot determine which solution the output of the initialized network is near, and therefore cannot determine which solution the network finally converges to.
  For this reason, the IINN method is proposed for the multi-solution problems of nonlinear wave equations by deep learning methods.
  In fact, we are initializing the network parameters by training the network $\mathrm{NN}_1$ such that the initialized network is relatively close to the target solution.
\end{remark}

\begin{figure*}[!t]
    \centering
%\vspace{-0.15in}
  {\scalebox{0.85}[0.85]{\includegraphics{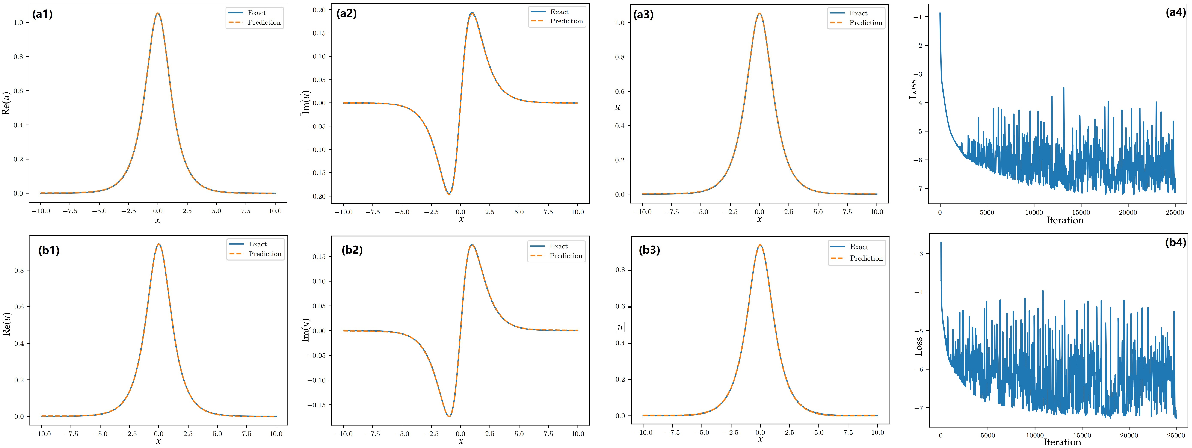}}}
  \hspace{-0.4in}\vspace{0.15in}
\vspace{0.1in}\caption{\small \rm The complex solution $u(x)$ of 1D NLS equation (\ref{NLS1d}) with Scarf-II potential (\ref{scarf}) in self-focusing and self-defocusing cases. In self-focusing case: (a1, a2, a3) The real part, imaginary part and intensity diagrams of learned solution and exact one at $V_0=-1$, $W_0=-1$, $\mu=-1$ and $g=-1$.
 In self-defocusing case: (b1, b2, b3) The real part, imaginary part and intensity diagrams of learned solution and exact one at $V_0=-3$, $W_0=-1$, $\mu=-1$ and $g=1$.
(a4, b4) The loss-iteration plot of $\mathrm{NN}_2$.}
  \label{f-scarf}
\end{figure*}

\v{\it Case 4.}---{\it Soliton solution of 1D NLS equation with complex potentials.} Next we consider solitary waves in the above 1D NLS equation with complex potentials. Especially, the well-known parity-time ($\mathcal{PT}$)-symmetric Scarf-II potential is introduced as follows~\cite{Ah01}
\begin{equation}\label{scarf}
  V(x)=V_{\mathrm{re}}(x)+iV_{\mathrm{im}}(x)=V_0\,\mathrm{sech}^2(x)+iW_0\,\mathrm{sech}(x)\tanh(x).
\end{equation}
In the optical wave propagation, the real-valued external potential $V_{\mathrm{re}}(x)$ is responsible for the refractive index, and $V_{\mathrm{im}}(x)$
is usually used to describe the gain-and-loss distribution of the optical potential.
Here the real-valued parameters $V_0$ and $W_0$ can be employed to modulate the amplitudes of external potential, and gain-and-loss distribution, respectively.
Then the 1D NLS equation (\ref{NLS1d}) with $\mathcal{PT}$ Scarf-II potential (\ref{scarf}) admits the following exact soliton solution~\cite{pt1,pt2,pt3}
\begin{equation}\label{scarf1dsolu}
  u(x)=\sqrt{-\frac{2+V_0+W_0^2/9}{g}}\,\mathrm{sech}(x)\exp\left[-\frac{iW_0}{3}\arctan(\sinh(x))\right],
\end{equation}
where $\mu=-1$, $2+V_0+W_0^2/9>0$ in self-focusing case ($g=-1$) and $2+V_0+W_0^2/9<0$ in self-defocusing case ($g=1$).

Considering that the solution is a complex-valued function, in practical we set the network's output $\hat{u}(x)=p(x)+iq(x)$ and then separate Eq.~(\ref{NLS1d}) into its real and imaginary parts.
\begin{equation}\label{pq}
 \begin{array}{l}
     \displaystyle\mathcal{F}_p(x):= -\partial_{xx}p+V_{\mathrm{re}}p-V_{\mathrm{im}}(x)q+g(p^2+q^2)p-\mu p,\v\\
     \displaystyle\mathcal{F}_q(x):= -\partial_{xx}q+V_{\mathrm{re}}q+V_{\mathrm{im}}(x)p+g(p^2+q^2)q-\mu q.
 \end{array}
\end{equation}
Then the loss function $\mathcal{L}_2$ becomes
\begin{equation}\label{L2new}
  \mathcal{L}_2:=\frac{1}{N}\frac{\sum_{i=1}^{N}\left(|\mathcal{F}_p(x_i)|^2+|\mathcal{F}_q(x_i)|^2\right)}{\max_i\left(\sqrt{(p(x_i)^2+q(x_i)^2}\right)}.
\end{equation}

In self-focusing case ($g=-1$), we let $V_0=-1$ and $W_0=-1$.
Then based on IINN method, we set $\Omega=[-10,10]$ with $N=200$, and take the initial value as
\begin{equation}\label{scarfu0}
  u_0(x)=\mathrm{sech}(x)e^{ix}.
\end{equation}
After 2000 steps of iterations, with $\mathrm{NN}_1$ taking 5s and 25000 steps of iterations, with $\mathrm{NN}_2$ taking 322s, the relative $L_2$ errors $E_1$ of $u(x)$, $p(x)$ and $q(x)$, respectively, are 6.839215e-04, 9.031551e-04 and 1.579135e-03 compared to the exact solution (\ref{scarf1dsolu}). Figs.~\ref{f-scarf}(a1-a4) exhibit the comparison of real part, imaginary part and intensity $|u(x)|$  between the learned solutions and exact solutions as well as the loss-iteration diagram for $\mathrm{NN}_2$.

In self-defocusing case ($g=1$), we take $V_0=-3$ and $W_0=-1$.
Similarly, according to IINN method, we set $\Omega=[-10,10]$ with $N=200$, and take the initial value as
\begin{equation}\label{scarfu1}
  u_0(x)=\mathrm{sech}(x)e^{ix}.
\end{equation}
After 2000 steps of iterations, with $\mathrm{NN}_1$ taking 5s and 25000 steps of iterations, with $\mathrm{NN}_2$ taking 323s, the relative $L_2$ errors $E_1$ of $u(x)$, $p(x)$ and $q(x)$, respectively, are 5.084394e-04, 1.571929e-03 and 2.666748e-03 compared to the exact solution (\ref{scarf1dsolu}). Figs.~\ref{f-scarf}(b1-b4) exhibit the comparison of real part, imaginary part and intensity $|u(x)|$ between the learned solutions and exact solutions as well as the loss-iteration diagram for $\mathrm{NN}_2$.

%\subsubsection{The gap soliton of 1D NLS equation with saturable nonlinearity and $\mathcal{PT}$-symmetric optical lattices}

\v \noindent {\bf Example 3.2} (The gap soliton of 1D saturable NLS equation with $\mathcal{PT}$-symmetric optical lattice).
When the nonlinear term is taken as the saturable nonlinearity,
\begin{equation}\label{sa}
  \mathcal{N}(x,|U|^2)U=\frac{g|U|^2U}{1+s|U|^2},
\end{equation}
and  $V(x)$ is taken as $\mathcal{PT}$-symmetric optical lattice (OL)
\begin{equation}
\label{1dol}
V(x)=V_0\cos(2x)+iW_0\sin(2x),
\end{equation}
the generalized NLS equation (\ref{U}) becomes
the 1D saturable NLS equation (SNLS) with
$\mathcal{PT}$-symmetric optical lattice
\bee
 iU_t-U_{xx}+V(x)U+\frac{g|U|^2U}{1+s|U|^2}=0,
\ene
where $s>0$ stands for the degree of saturable nonlinearity. Eq.~(\ref{lu}) is rewritten as
\begin{equation}\label{sNLS1d}
   L u =0, \quad  L= -\frac{1}{2}\partial_{xx} + V(x) + \frac{g|u|^2}{1+s|u|^2}-\mu,
\end{equation}

Specifically, we consider the self-defocusing case $(g=1)$ and take $s=0.3$, $V_0=3$, and $W_0=0.5$. For this case, the fundamental solitons can be found in the first gap at $\mu=-0.1$.

By the IINN method, the initial condition is taken as
\begin{equation}\label{snlsu0}
  u_0(x)=\mathrm{sech}(x)\cos(x)e^{ix},
\end{equation}
and the computational domain is set as $\Omega =[-8,8]$ with $N=800$. Considering the solution is complex, similar to the previous example, we write $\hat{u}(x)=p(x)+iq(x)$.
Then after 15000 steps of iterations, with $\mathrm{NN}_1$ taking 35s and 20000 steps of iterations, with $\mathrm{NN}_2$ taking 271s, the relative $L_2$ errors $E_1$ of $u(x)$, $p(x)$ and $q(x)$, respectively, are 4.008711e-04, 1.924903e-03 and 1.130601e-03 compared to the exact solution (numerically obtained with the same parameters as the previous example). Figs.~\ref{f-snls1d}(a1, a2, a3) exhibit the intensity diagram of real part, imaginary part and $|u(x)|$. The loss-iteration plot of $\mathrm{NN}_2$ is displayed in Fig.~\ref{f-snls1d}(a4).

\begin{figure*}[!t]
    \centering
%\vspace{-0.15in}
  {\scalebox{0.85}[0.85]{\includegraphics{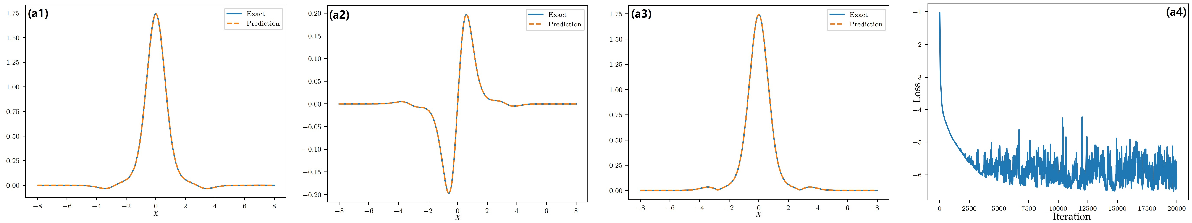}}}
  \hspace{-0.4in}\vspace{0.15in}
\vspace{0.1in}\caption{\small \rm  The complex solution $u(x)$ of 1D SNLS equation (\ref{sNLS1d}) with $\mathcal{PT}$-symmetric optical lattices (\ref{1dol}) in self-defocusing case. (a1, a2, a3) The real part, imaginary part and intensity diagrams of learned solution and exact one at $s=0.3$, $\mu=-0.1$, $V_0=3$, and $W_0=0.5$.
(a4) The loss-iteration plot of $\mathrm{NN}_2$.}
  \label{f-snls1d}
\end{figure*}

%\subsubsection{The single-soliton solutions of 1D coupled focusing-defocusing nonlinear Schr\"{o}dinger (FDNLS) equations}

\v \noindent {\bf Example 3.3} (Soliton solutions of 1D  focusing-defocusing coupled nonlinear Schr\"{o}dinger equations).
Next, we consider the single-soliton solutions of 1D focusing-defocusing coupled nonlinear Schr\"{o}dinger (fdCNLS) equations given as follows
\begin{equation}\label{fdnls}
  \begin{split}
   & iU_{1t}+U_{1xx}+(|U_1|^2-|U_2|^2)U_1=0, \\
   & iU_{2t}+U_{2xx}+(|U_1|^2-|U_2|^2)U_2=0.
  \end{split}
\end{equation}
The equation admits the solitary waves in the form
\begin{equation}\label{fdnlsuv}
  \{U_1,U_2\}=\{u_1(x),u_2(x)\}e^{i\mu t}.
\end{equation}
Substituting them into Eqs.~(\ref{fdnls}) yields the equations for $u_1(x)$ and $u_2(x)$
\begin{equation}\label{fdnls1}
  \begin{split}
   \mathcal{F}_{u_1}:=&\, u_{1xx}+(|u_1|^2-|u_2|^2)u_1-\mu u_1=0, \\
   \mathcal{F}_{u_2}:=&\, u_{2xx}+(|u_1|^2-|u_2|^2)u_2-\mu u_2=0.
  \end{split}
\end{equation}
Then Eq.~(\ref{fdnls1}) [cf. Eq.~(\ref{L0})] is rewritten as
\begin{equation}\label{fdnlsL}
   \mathbf{L}_0 \mathbf{u} =0, \quad \mathbf{L}_0=
   \left(\begin{array}{cc}
 \partial_{xx} + (|u_1|^2-|u_2|^2)-\mu & 0 \\
  0 & \partial_{xx} + (|u_1|^2-|u_2|^2)-\mu \\
 \end{array} \right), \quad \mathbf{u}
 =\left(                                                       \begin{array}{c} u_1 \\                                                       u_2                                                        \end{array}                                                                  \right).
\end{equation}
The 1D fdCNLS equations (\ref{fdnls1}) admit the following exact solutions
\begin{equation}\label{fdnlssolu}
  (u_1,u_2)^\top=(A,B)^\top\mathrm{sech}(cx),
\end{equation}
where $\mu=c^2$ and $A^2-B^2-2c^2=0$.
Specifically, we take $c=2$ and $B=1$. For the case, the single-soliton solutions can be found at $\mu=4$.

Similar to the previous example, the loss function $\mathcal{L}_2$ can be written as
\begin{equation}\label{L2new1}
  \mathcal{L}_2:=\frac{1}{N}\frac{\sum_{i=1}^{N}\left(|\mathcal{F}_{u_1}(x_i)|^2+|\mathcal{F}_{u_2}(x_i)|^2\right)}{\max_i\left(\sqrt{(u_1(x_i)^2+u_2(x_i)^2}\right)}.
\end{equation}
By the IINN method, the initial condition is taken as
\begin{equation}\label{fdnlsu0}
  \{u_{10}(x), u_{20}(x)\}=\{2\mathrm{sech}(x), \mathrm{sech}(x)\},
\end{equation}
and the computational domain is set as $\Omega =[-15,15]$ with $N=500$.
Then after 5000 steps of iterations, with $\mathrm{NN}_1$ taking 11s and 20000 steps of iterations, with $\mathrm{NN}_2$ taking 251s, the relative $L_2$ errors $E_1$ of $u_1(x)$ and $u_2(x)$, respectively, are 1.569412e-03 and 2.026548e-03 compared to the exact solution. Figs.~\ref{f-fdnls}(a1, a2) display the comparison of $u_1(x)$ and $u_2(x)$ between
the learned solutions and exact solutions. The loss-iteration plot of $\mathrm{NN}_2$ is displayed in Fig.~\ref{f-fdnls}(a3).

\begin{figure*}[!t]
    \centering
%\vspace{-0.15in}
  {\scalebox{0.81}[0.81]{\includegraphics{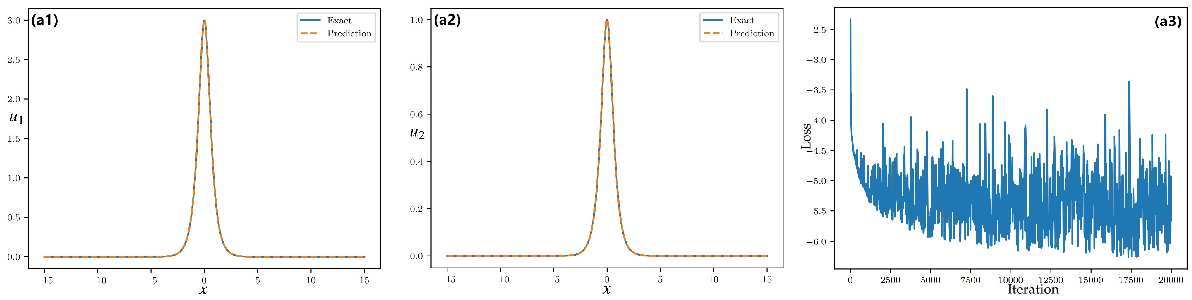}}}
  \hspace{-0.4in}\vspace{0.15in}
\vspace{0.1in}\caption{\small \rm  The single-soliton solutions $u_1(x)$ and $u_2(x)$ of 1D fdCNLS equations (\ref{fdnls1}). (a1, a2) The comparison of $u_1(x)$ and $u_2(x)$ between
the learned solutions and exact solutions at $A=3$, $B=1$, $c=2$, and $\mu=4$.
(a3) The loss-iteration plot of $\mathrm{NN}_2$.}
  \label{f-fdnls}
\end{figure*}

%\subsubsection{The traveling wave solution of KdV equation}

\begin{figure*}[!t]
    \centering
%\vspace{-0.15in}
  {\scalebox{0.8}[0.8]{\includegraphics{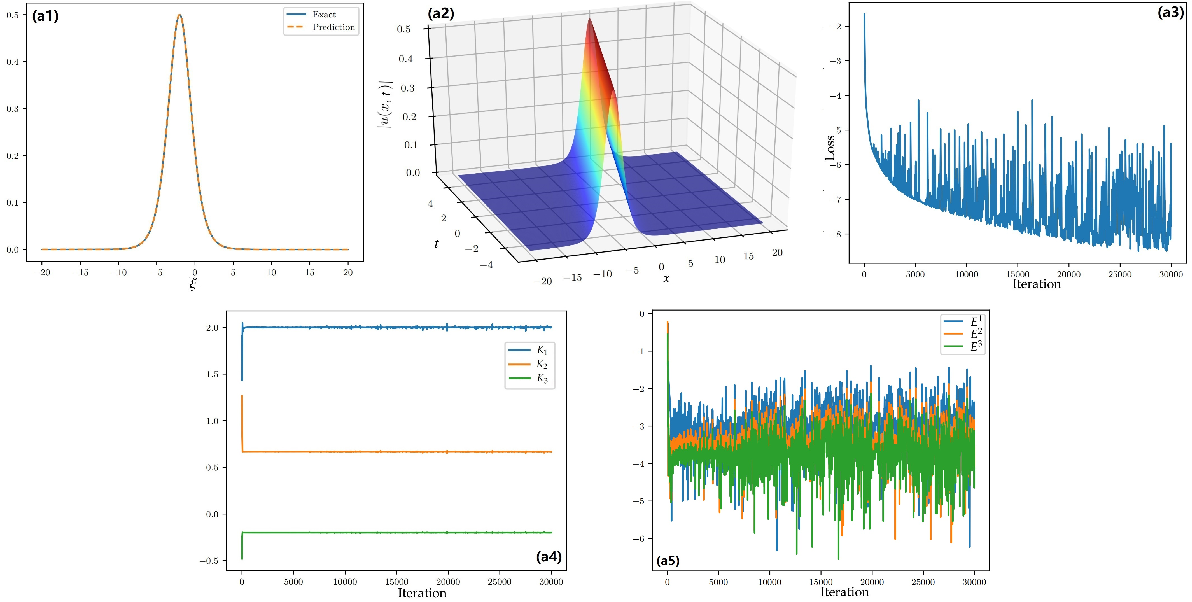}}}
  \hspace{-0.4in}\vspace{0.15in}
\vspace{0.1in}\caption{\small \rm The traveling wave solution $u(\xi)$ ($\xi=x-ct$) of KdV equation (\ref{kdv}). (a1) The learned solution and exact one at $c=1$ and $a=2$.
(a2) The 3D profile of the learned solution. (a3) The loss-iteration plot of $\mathrm{NN}_2$.
(a4) The conserved quantities $\int_{\mathbf{R}}\omega\dd x$ versus iteration, where $K_1$, $K_2$ and $K_3$ denote $\omega_1=U$, $\omega_2=U^2$ and $\omega_3=\frac{1}{2}U_x^2-U^3$, respectively.
(a5) The conserved quantity error versus iteration, where $E^i=\log_{10}|K_i-K_i^*|$ and $K_i^*$ is the true value of conserved quantity.}
  \label{f-kdv}
\end{figure*}

\v \noindent {\bf Example 3.4} (The solitary wave solution of KdV equation). The next example we consider is the KdV equation given as follows
\begin{equation}\label{kdv}
  U_t+6UU_x+U_{xxx}=0.
\end{equation}
Considering the traveling wave transform $\xi=x-ct$, then $U(x,t)=u(x-ct)=u(\xi)$ and one obtains
\begin{equation}\label{kdvu}
  -c\frac{\dd u}{\dd \xi}+6u\frac{\dd u}{\dd \xi}+\frac{\dd^3 u}{\dd \xi^3}=0,
\end{equation}
We can integrate this with respect to $\xi$ to obtain
\begin{equation}\label{kdvu1}
  -cu+3u^2+\frac{\dd^2 u}{\dd \xi^2}=A,
\end{equation}
where $A$ is a constant of integration. Therefore we consider the following nonlinear wave system
\begin{equation}\label{kdvL}
  Lu-A=0, \quad  L=\frac{\dd^2}{\dd \xi^2}+3u-c,
\end{equation}
When $A=0$, the solitary wave solution of the KdV equation can be found,
\begin{equation}\label{kdvsolux}
  u(\xi)=\frac{1}{2}c\,\mathrm{sech}^2\left[\frac{\sqrt{c}}{2}(\xi+a)\right],
\end{equation}
where $a$ is an arbitrary constant.

It should be noted that for Eq.~(\ref{kdvL}), there exist infinitely many solutions for given constant $c$.
If we use traditional PINNs method, we may not know which solution we will obtain.
Based on the IINN method, we take the initial value
\begin{equation}\label{kdvu0}
  u_0(\xi)=\mathrm{sech}^2(\xi+2),
\end{equation}
and consider $c=1$, $\Omega=[-20,20]$ with $N=500$. After 12000 steps iterations with 28s for $\mathrm{NN}_1$ and 30000 steps iterations with 235s for $\mathrm{NN}_2$, the relative $L_2$ error $E_1$=8.506870e-04 with exact solution (\ref{kdvsolux}) at $a=2$. Figs.~\ref{f-kdv}(a1, a3) displays the comparison between the learned solutions and exact
solutions at $c=1$ and $a=2$ as well as the loss-iteration diagram for $\mathrm{NN}_2$.
The 3D profile of the traveling wave solution $u(\xi)=u(x-ct)$ is shown in Fig.~\ref{f-kdv}(a2).
Furthermore, by changing the value of parameter $a$ in the initial condition $u_0$, we can obtain solutions at different positions.

Furthermore, since the integrability of KdV equation (\ref{kdv}), the three specific conservation laws are given as follows~\cite{Ablowitz-2}
\begin{equation}\label{cl1}
  U_t=(-U_{xx}-3U^2)_x,
\end{equation}
\begin{equation}\label{cl2}
  (U^2)_t=(-2UU_{xx}+U_x^2-4U^3)_x,
\end{equation}
\begin{equation}\label{cl3}
  (\frac{1}{2}U_x^2-U^3)_t=(-U_{x}U_{xxx}+\frac{1}{2}U_{xx}^2+3U^2U_{xx}-6UU_x^2+\frac{9}{2}U^4)_x.
\end{equation}
Similarly, these three conserved quantities \bee
K_1=\int_{\mathbf{R}}\omega\dd x=\int_{\mathbf{R}}U\dd x,\quad
K_2=\int_{\mathbf{R}}\omega\dd x=\int_{\mathbf{R}}U^2\dd x,\quad
K_3=\int_{\mathbf{R}}\omega\dd x=\int_{\mathbf{R}}\left(
\frac12U_x^2-U^3\right)\dd x
\ene versus iteration are displayed in Fig.~\ref{f-kdv}(a4).
And the variations of their error $E^i$, during the iteration are also displayed in Figs.~\ref{f-kdv}(a5), where $E^i=\log_{10}|K_i-K_i^*|$ and $K_i^*$ is the true value of conserved quantity.

On the other hand, in order to verify the importance of initial value selection, we provide numerical examples with different initial states that do not have the correct form or have different forms.
Firstly, we change the initial value $u_0(\xi)$ (\ref{kdvu0}) to another form
\begin{equation}\label{kdvu01}
  u_{01}(\xi)=\mathrm{sech}^2(\xi).
\end{equation}
Fixing the other parameters constant, we obtain the learned solution (see Fig.~\ref{f-kdvr} (a1)), after 12000 steps iterations for $\mathrm{NN}_1$ and 30000 steps iterations with for $\mathrm{NN}_2$.
It can be seen that the network still converges according to the loss-iteration diagram for $\mathrm{NN}_2$ (see Fig.~\ref{f-kdvr} (a2)). But the center of the solitary wave is at $x=0$, which is not what we want.
In particular, we take another initial value as
\begin{equation}\label{kdvu02}
  u_{02}(\xi)=\sin(\xi)\mathrm{sech}^2(\xi+2),
\end{equation}
which can be regarded as a modified one with varying amplitude ($\sin(\xi)$) of the previous initial value $u_0(\xi)$ given by Eq.~(\ref{kdvu0}). After 12000 steps iterations with for $\mathrm{NN}_1$ and 30000 steps iterations for $\mathrm{NN}_2$,
Fig.~\ref{f-kdvr}(a4) shows that the loss error for $\mathrm{NN}_2$ can only drop to around $10^{-3}$ orders of magnitude (in this case, the network is considered not convergent) such that the target solitary wave solution (see the solid line in Fig.~\ref{f-kdvr}(a3)) can not be obtained by using the initial value (\ref{kdvu02}), which generates the result (see the dashed line in Fig.~\ref{f-kdvr}(a3)). Therefore, the performances of IINN rely heavily on the suitable initial guess of the solitary wave solution.

\begin{figure*}[!t]
    \centering
%\vspace{-0.15in}
  {\scalebox{0.89}[0.89]{\includegraphics{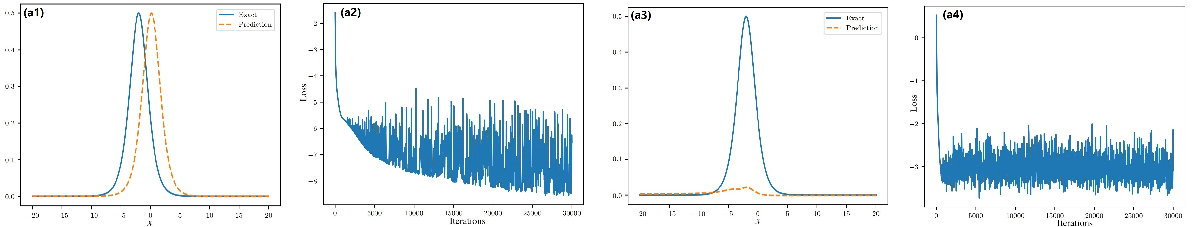}}}
  \hspace{-0.4in}\vspace{0.15in}
\vspace{0.1in}\caption{\small \rm The traveling wave solution $u(\xi)$ ($\xi=x-ct$) of KdV equation (\ref{kdv}). (a1) The learned solution and exact one at $c=1$ and $a=2$ by choosing initial value as $u_{01}(\xi)$ (\ref{kdvu01}). (a2) The loss-iteration plot of $\mathrm{NN}_2$.
(a3) The learned solution and exact one at $c=1$ and $a=2$ by choosing initial value as $u_{02}(\xi)$ (\ref{kdvu02}). (a2) The loss-iteration plot of $\mathrm{NN}_2$.}
  \label{f-kdvr}
\end{figure*}

\begin{remark}
  Here, Eq.~(\ref{kdvu}) is reduced to Eq.~(\ref{kdvu1}) by integrating. Then, we calculate Eq.~(\ref{kdvu1}) based on IINN method. We can calculate the Eq.~(\ref{kdvu}) directly. However, due to the automatic differential algorithm, the calculation time and error will increase. For example, for Eq.~(\ref{kdvu}), it takes twice as long to train $\mathrm{NN}_2$ as it does for Eq.~(\ref{kdvu1}). Therefore, for higher-order equations, order reduction is a good way to speed up the calculation.
\end{remark}

%\subsubsection{Ground state and vortex soliton of the 2D NLS equation with harmonic-oscillator (HO) trapping potential}

\v \noindent {\bf Example 3.5} (Ground state and vortex soliton of the 2D NLS equation with harmonic-oscillator (HO) trapping potential). The next example is the 2D self-focusing NLS equation with HO trapping potential
\bee\label{2d-nls}
 iU_t-\Delta_2 U+V(x,y)U-|U|^2U=0,
\ene
where $\Delta_2=\partial_x^2+\partial_y^2$, and $V(x,y)$ is the HO trapping potential in the form
\begin{equation}\label{HO2}
  V(x,y)=\frac{1}{2}(x^2+y^2).
\end{equation}

The stationary solution $U(\mathbf{x},t)=u(\mathbf{x})e^{i\mu t}$ makes Eq.~(\ref{2d-nls}) become
\begin{equation}\label{NLS2d}
   L u(\mathbf{x}) =0, \quad  \mathrm{where} \quad  L= -\Delta_2 + V(x,y) - |u|^2-\mu,
\end{equation}
where $\mathbf{x}=(x,y)$.

For ground state, we consider the computational domain $\Omega=[-5,5]\times [-5,5]$ with $N=20000$, and take the initial value
\begin{equation}\label{nls2du0}
  u_0(x,y)=e^{-0.5(x^2+y^2)}.
\end{equation}
Then the learned 2D ground state solution can be obtained at $\mu=0.5$, whose intensity diagram $|u(x,y)|$ and 3D profile are shown in Figs.~\ref{f-nls2d}(a1, a2), after 2000 steps of iterations with $\mathrm{NN}_1$ taking 7s and 20000 steps of iterations with $\mathrm{NN}_2$ taking 531s. The relative $L_2$ error $E_1$=4.624272e-04 compared to the exact solution $u$ (numerically obtained). The module of absolute error $|\hat{u}-u|$ is exhibited in Fig.~\ref{f-nls2d}(b1). The loss-iteration plot of $\mathrm{NN}_2$ for ground state is displayed in Fig.~\ref{f-nls2d}(b2).

\begin{figure*}[!t]
    \centering
%\vspace{-0.15in}
  {\scalebox{0.85}[0.85]{\includegraphics{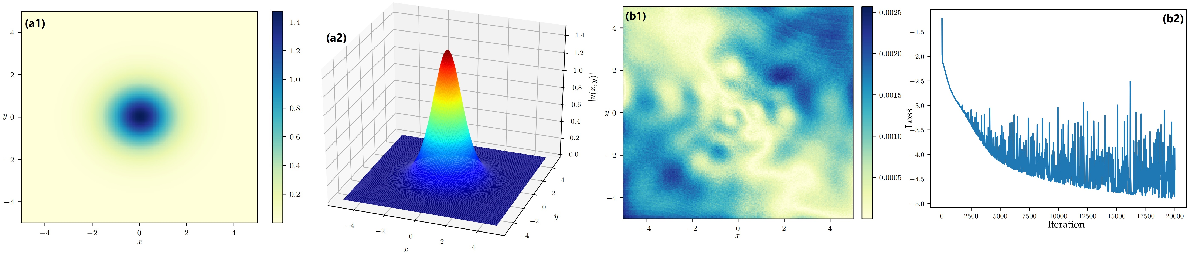}}}
  \hspace{-0.4in}\vspace{0.15in}
\caption{\small \rm The ground state solution $u(\mathbf{x})$ of 2D NLS equation (\ref{NLS2d}). (a1) The intensity diagram $|u(\mathbf{x})|$ of learned solution at $\mu=0.5$.
(a2) The 3D profile of the learned solution. (b1) The module of absolute error between the exact and learned solutions.
(b2) The loss-iteration plot of $\mathrm{NN}_2$.}
  \label{f-nls2d}
\end{figure*}

Furthermore, Eq.~(\ref{NLS2d}) simultaneously admits vortex soliton at $\mu=0.5$. By IINN method, the initial condition is taken as
\begin{equation}\label{nls2dvu0}
 u_0(r,\phi)=3re^{-0.5r^2}e^{i\phi},
\end{equation}
where $(r,\phi)$ is the polar coordinate of the $(x,y)$ plane (see, e.g., Ref.~\cite{boris}). The computational domain is set as $\Omega=[-5,5]\times [-5,5]$ with $N=20000$.
Because the vortex soliton solution is complex, similar to the previous example, we write $\hat{u}(x,y)=p(x,y)+iq(x,y)$.
Then after 10000 steps of iterations, with $\mathrm{NN}_1$ taking 36s and 20000 steps of iterations, with $\mathrm{NN}_2$ taking 992s, the relative $L_2$ errors $E_1$ of $u(x,y)$, $p(x,y)$ and $q(x,y)$, respectively, are 1.833355e-03, 6.223326e-03 and 6.214960e-03 compared to the exact solution (numerically obtained). Figs.~\ref{f-vortex}(a1, a2, a3) exhibit the intensity diagram of real part, imaginary part and $|u(x,y)|$. The 3D profile is shown in Fig.~\ref{f-vortex}(b1). The module of absolute error $|\hat{u}-u|$ is shown in Fig.~\ref{f-vortex}(b2).
And the loss-iteration plot of $\mathrm{NN}_2$ for vortex soliton is displayed in Fig.~\ref{f-vortex}(b3).

\begin{figure*}[!t]
    \centering
%\vspace{-0.15in}
\vspace{0.1in}  {\scalebox{0.85}[0.85]{\includegraphics{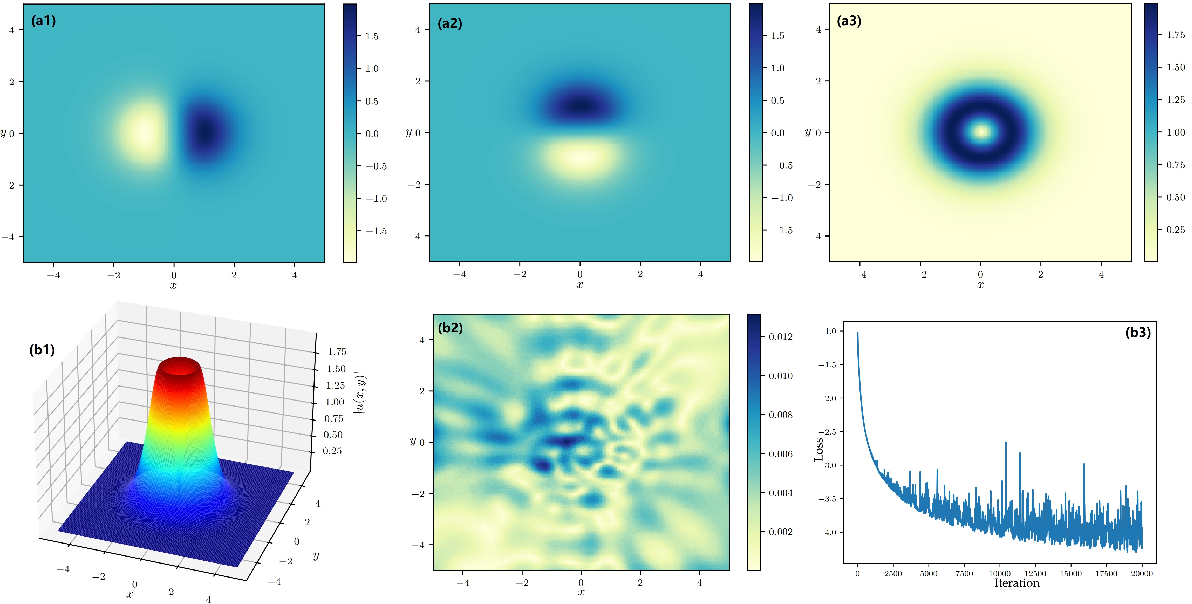}}}
  \hspace{-0.4in} \vspace{0.15in}
\vspace{0.1in}\caption{\small \rm The vortex soliton $u(\mathbf{x})$ of 2D NLS equation (\ref{NLS2d}). (a1, a2, a3) The real part, imaginary part and intensity  $|u(\mathbf{x})|$ diagrams of learned solution and exact one at $\mu=0.5$. (b1)  The 3D profile of the learned solution. (b2) The module of absolute error between the exact and learned solutions.
(b3) The loss-iteration plot of $\mathrm{NN}_2$.}
  \label{f-vortex}
\end{figure*}

%\subsubsection{Gap soliton of the 2D NLS equation with periodic potential}

\v \noindent {\bf Example 3.6} (Gap soliton of the 2D NLS equation with periodic potential).
The example we consider is the computation of gap solitons in the 2D defocusing NLS equation (\ref{2d-nls}) with the optical lattice potential
\bee\label{2d-nls-2}
 iU_t-\Delta_2 U+V(x,y)U+|U|^2U=0,
\ene
where $\Delta_2=\partial_x^2+\partial_y^2$, and the optical lattice potential is
\begin{equation}\label{OL}
  V(x,y)=V_0\left(\sin^2x+\sin^2y\right),\qquad V_0\in\mathbb{R}.
\end{equation}

The stationary solution $U(x,y,t)=u(x,y)e^{i\mu t}$ makes Eq.~(\ref{2d-nls-2}) become
\begin{equation}\label{NLS2d2}
   L u(x,y) =0, \quad  \mathrm{where} \quad  L= -\Delta_2 + V(x,y) + |u|^2-\mu.
\end{equation}
Eq.~(\ref{NLS2d2}) with periodic potential (\ref{OL}) at $V_0=6$ admits soliton solutions in the first bandgap. By the IINN method, the initial condition is taken as
\begin{equation}
\label{olu0}
u_0(x,y)=\mathrm{sech}\Big(\sqrt{x^2+y^2}\Big)\cos(x)\cos(y).
\end{equation}
And the computational domain is set as $\Omega=[-10,10]\times[-10,10]$ with $N = 20000$. Then the learned gap soliton can be found at $\mu =5$, whose intensity diagram $|u(x, y)|$ and 3D profile
are displayed in Figs.~\ref{f-ol2d}(a1, a2), after 20000 steps of iterations with $\mathrm{NN}_1$ taking 93s and 40000 steps of iterations with $\mathrm{NN}_2$ taking 1739s. The relative $L_2$ error $E_1$=4.674467e-03 compared to the exact solution $u$ (numerically obtained).
And the module of absolute error is exhibited in Fig.~\ref{f-ol2d}(b1). The loss-iteration plot of $\mathrm{NN}_2$ for gap soliton is displayed in Fig.~\ref{f-ol2d}(b2).

\begin{figure*}[!t]
    \centering
%\vspace{-0.15in}
  {\scalebox{0.85}[0.85]{\includegraphics{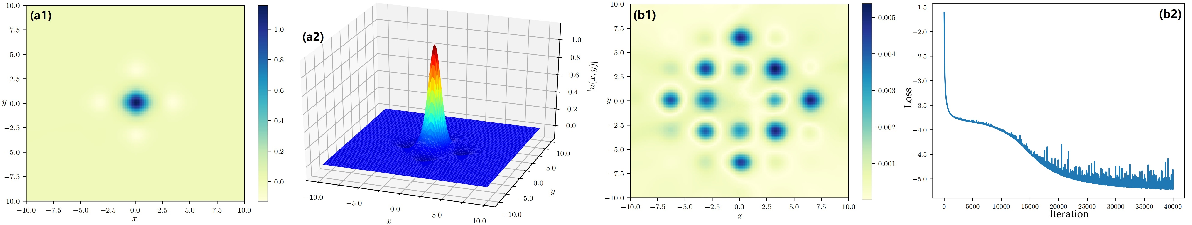}}}
  \hspace{-0.4in}\vspace{0.15in}
\vspace{0.1in}\caption{\small \rm The gap soliton $u(\mathbf{x})$ of 2D NLS equation (\ref{NLS2d}) with periodic potential (\ref{OL}). (a1) The intensity diagram $|u(\mathbf{x})|$ of learned solution
at  $\mu = 5$.
(a2) The 3D profile of the learned solution. (b1) The module of absolute error between the exact and learned solutions.
(b2) The loss-iteration plot of $\mathrm{NN}_2$.}
  \label{f-ol2d}
\end{figure*}

%\subsubsection{2D amended GP equation with LHY correction and multi-well potential}
\begin{figure*}[!t]
    \centering
%\vspace{-0.15in}
  {\scalebox{0.85}[0.85]{\includegraphics{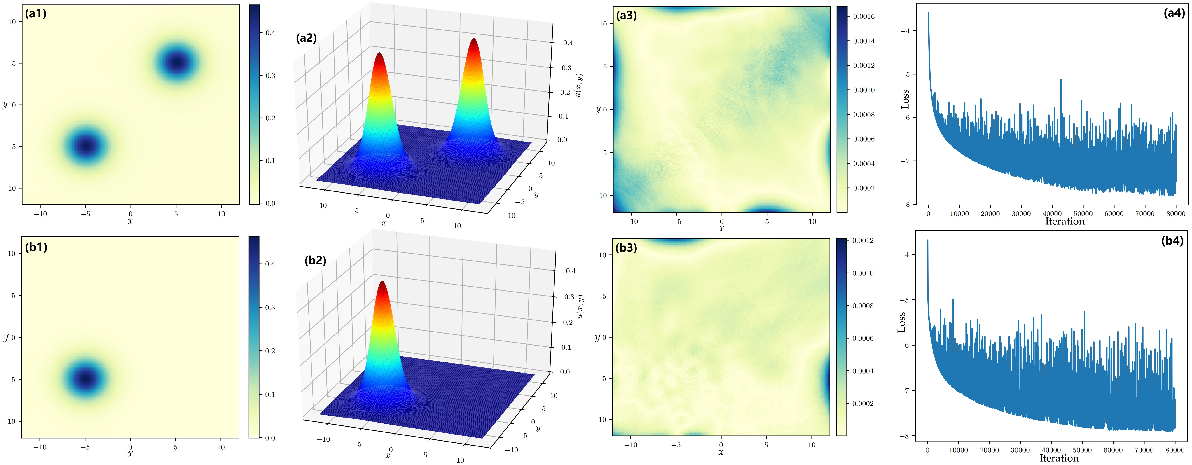}}}
  \hspace{-0.4in}\vspace{0.15in}
\vspace{0.1in}\caption{\small \rm The quantum droplets $u(\mathbf{x})$ of 2D amended GP equation with LHY correction (\ref{gp2}) (a1, b1) The intensity diagrams of 2D QDs in different branches at $\mu=-0.5$.
(a2, b2) The 3D profile of the learned solution in different branches. (a3, b3) The module of absolute error between the exact and learned solutions. (a4, b4) The loss-iteration plot of $\mathrm{NN}_2$.}
  \label{f-4well}
\end{figure*}

\v \noindent {\bf Example 3.7} (Quantum droplets of the 2D amended GP equation with LHY correction and multi-well potential).
In the next example, we calculate the symmetry breaking bifurcations of 2D quantum droplets (QDs) for the amended Gross-Pitaevskii equation with Lee-Huang-Yang (LHY) corrections and a Gaussian
quadruple-well potential \cite{2Dqd}. Here $\mathcal{N}(x,|U|^2)U$ is replaced by $2\ln(2|U|^2)|U|^2U$, and we have
\begin{equation}\label{GP}
   iU_t=\left[-\dfrac{1}{2}\Delta_2+V({\bf r})+2\ln(2|U|^2)|U|^2\right]U,
\end{equation}
where $\Delta_2=\partial_x^2+\partial_y^2$, and the 2D Gaussian quadruple-well potential is taken in the form
\begin{equation}\label{potential}
  V({\bf r})= V_0\sum_{j=1}^4\exp\left[-k|{\bf r}-{\bf r}_j|^2 \right],\quad V_0<0,\quad k>0,
\end{equation}
where ${\bf r}=(x,y)$, ${\bf r}_j=(\pm x_0,\pm y_0)$, $j=1,2,3,4$ control the locations of these four potential wells, and $|V_0|$ and $k$ regulate the depths and widths of potential wells, respectively.
We let $V_0 =-0.5$ and $k = 0.1$ in the following discussion. Here, $(x_0, y_0) = (5, 5)$ allows the four potential wells to be fully separated.

Analogously, we set $U({\bf r}, t)=u({\bf r})e^{-i\mu t}$, where $\mu$ stands for the chemical potential. Substituting the solution into Eq.~(\ref{GP}) yields the following nonlinear stationary equation
\begin{equation}\label{gp2}
  L u =0, \quad   L=-\frac{1}{2}\Delta_2+2\ln(2|u|^2)|u|^2+V({\bf r})-\mu.
\end{equation}
According to Ref.~\cite{2Dqd}, we know that there exist twelve different real solution branches and one complex solution branches. In the following, we calculate two of these branches for the same equation (\ref{gp2}) at $\mu=-0.5$.

In branch B1 (see the notation in Ref.~\cite{2Dqd}), we consider the computational domain $\Omega = [-12, 12]\times[-12, 12]$ with $N = 20000$, and take the initial value as
\begin{equation}\label{gpu01}
  u_0({\bf r})=0.3\left[e^{-0.1|{\bf r}-{\bf r}_1|^2}+e^{-0.1|{\bf r}-{\bf r}_3|^2}\right],\quad {\bf r}_1=(5,5),\quad {\bf r}_3=(-5,-5).
\end{equation}
Then the learned 2D QDs in branch B1 can be obtained at $\mu=-0.5$, whose intensity diagram $|u(x,y)|$ and 3D profile are shown in Figs.~\ref{f-4well}(a1, a2), after 15000 steps of iterations with $\mathrm{NN}_1$ taking 75s and 80000 steps of iterations with $\mathrm{NN}_2$ taking 3723s. The relative $L_2$ error $E_1$=4.291031e-03 compared to the exact solution (numerically obtained). The module of absolute error $|\hat{u}-u|$ is exhibited in Fig.~\ref{f-4well}(a3). The loss-iteration plot of $\mathrm{NN}_2$ for QDs in branch B1 is displayed in Fig.~\ref{f-4well}(a4).

In branch A1 (see the notation in Ref.~\cite{2Dqd}), we take the initial value as
\begin{equation}\label{gpu02}
  u_0({\bf r})=0.46e^{-0.1|{\bf r}-{\bf r}_3|^2},\quad {\bf r}_3=(-5,-5).
\end{equation}
After 10000 steps of iterations, with $\mathrm{NN}_1$ taking 55s and 80000 steps of iterations, with $\mathrm{NN}_2$ taking 3780s, the relative $L_2$ errors $E_1$=2.851064e-03 compared to the exact solution (numerically obtained). Figs.~\ref{f-4well}(b1, b2) exhibit the intensity diagram $|u(x,y)|$ and its 3D profile. The module of absolute error $|\hat{u}-u|$ is shown in Fig.~\ref{f-4well}(b3). The loss-iteration plot of $\mathrm{NN}_2$ for QDs in branch A1 is displayed in Fig.~\ref{f-4well}(b4).

%\subsubsection{The one-soliton solution of Kadomtsev-Petviashvili(KP) equation}

\v \noindent {\bf Example 3.8} (Solitary-wave solution of Kadomtsev-Petviashvili equation).
Next we consider the $(2+1)$-dimensional KP equation with higher-order dispersion term
\begin{equation}\label{KP}
  (U_t+6UU_x+U_{xxx})_x+\alpha U_{yy}=0,\qquad \alpha\in\mathbb{R}.
\end{equation}
Similarly, we consider the traveling wave transform $\xi=x-ct$, then $U(x,y,t)=u(x-ct,y)=u(\xi,y)$ and one obtains
\begin{equation}\label{kpu}
  (-cu_{\xi}+6uu_{\xi}+u_{\xi\xi\xi})_{\xi}+\alpha u_{yy}=0.
\end{equation}
For Eq.~(\ref{kpu}), we can find the specific solitary wave solution as follows
\begin{equation}\label{kpsolu}
  u(\xi,y)=\frac{1}{2}(\alpha-c)\,\mathrm{sech}^2\left[\frac{\sqrt{\alpha-c}}{2}(\xi+y)\right].
\end{equation}

Then based on IINN method, we take the initial value
\begin{equation}\label{kpu0}
  u_0(\xi,y)=\mathrm{sech}^2(\xi+y),
\end{equation}
and consider $\alpha=2$, $c=1$ and $\Omega=[-5,5]\times[-5,5]$ with $N=20000$. After 10000 steps iterations with 33s for $\mathrm{NN}_1$ and 50000 steps iterations with 5537s for $\mathrm{NN}_2$, the relative $L_2$ error $E_1$=9.649114e-04 with exact solution (\ref{kpsolu}) at $\alpha=2$ and $c=1$. The intensity diagram $|u(\xi,y)|$ and its 3D profile are shown in Figs.~\ref{f-kp}(a1, a2). The module of absolute error $|\hat{u}-u|$ is exhibited in Fig.~\ref{f-kp}(b1). The loss-iteration plot of $\mathrm{NN}_2$ for traveling wave solution is displayed in Fig.~\ref{f-kp}(b2).

\begin{figure*}[!t]
    \centering
%\vspace{-0.15in}
  {\scalebox{0.85}[0.85]{\includegraphics{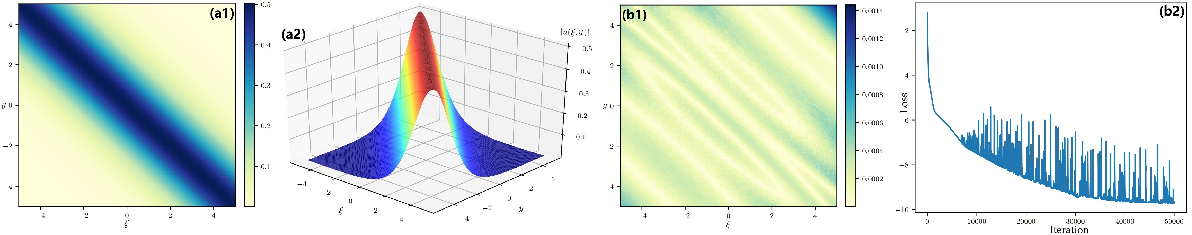}}}
  \hspace{-0.4in}\vspace{0.15in}
\vspace{0.1in}\caption{\small \rm The traveling wave solution $u(\xi,y)$ $(\xi = x-ct)$ of KP equation (\ref{KP}). (a1) The intensity diagram $|u(\xi,y)|$ of learned solution at $\alpha = 2$ and $c = 1$.
(a2) The 3D profile of the learned solution.  (b1) The module of absolute error between the exact and learned solutions. (b2) The loss-iteration plot of $\mathrm{NN}_2$.}
  \label{f-kp}
\end{figure*}
\begin{figure*}[!h]
    \centering
%\vspace{-0.15in}
  {\scalebox{0.8}[0.8]{\includegraphics{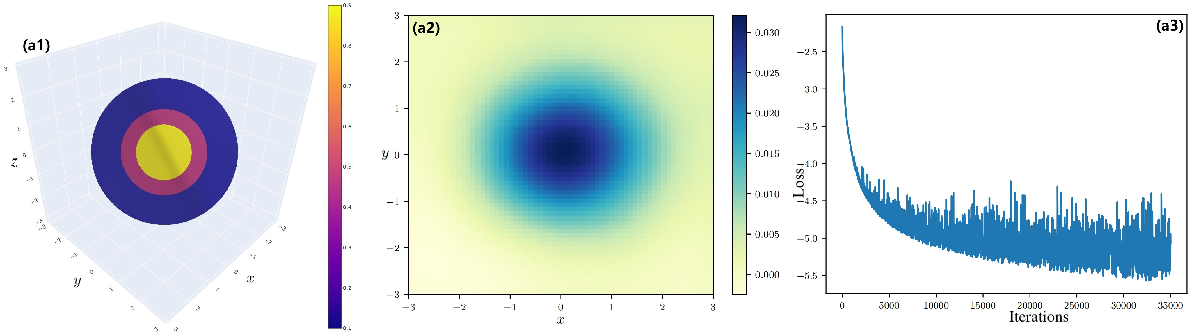}}}
  \hspace{-0.4in}\vspace{0.15in}
\vspace{0.1in}\caption{\small \rm The optical bullet (soliton solution) $u(\mathbf{x})$ of 3D NLS equation (\ref{NLS3d}). (a1) Isosurface of learned soliton at values $0.1$, $0.5$, $0.9$ at $\mu=1.5$. (a2) The 2D profile of the learned solution $u(\mathbf{x})$ at $z=0$.
(a3) The loss-iteration plot of $\mathrm{NN}_2$.}
  \label{f-nls3d}
\end{figure*}
It should be noted that for higher-order and higher-dimensional equations, the time required to train $\mathrm{NN}_2$ increases considerably.

%\subsubsection{The soliton solution of 3D NLS equation with HO trapping potential}

\v \noindent {\bf Example 3.9} (Optical bullets of 3D NLS equation with HO trapping potential).
In the last example, we consider the 3D focusing NLS equation with HO trapping potential
\bee\label{2d-nls-3}
 iU_t-\Delta_3 U+V(x,y,z)U-|U|^2U=0,
\ene
where $\Delta_3=\partial_x^2+\partial_y^2+\partial_z^2$, and the HO trapping potential is taken in the form
\begin{equation}\label{HO3}
  V(x,y,z)=\frac{1}{2}(x^2+y^2+z^2).
\end{equation}

The stationary solution $U(x,y,z,t)=u(x,y,z)e^{i\mu t}$ makes Eq.~(\ref{2d-nls-3}) become
\begin{equation}\label{NLS3d}
   L u(\mathbf{x}) =0, \quad   L= -\Delta_3 + V(x,y,z) - |u|^2-\mu,
\end{equation}
where $\mathbf{x}=(x,y,z)$.

The computational domain is consider $\Omega=[-3,3]\times [-3,3]\times [-3,3]$ with $N=40000$, and the initial value is taken as
\begin{equation}\label{nls3du0}
  u_0(x,y,z)=e^{-0.5(x^2+y^2+z^2)}.
\end{equation}
Then the learned 3D optical bullet (soliton solution) can be obtained at $\mu=1.5$, after 10000 steps of iterations with $\mathrm{NN}_1$ taking 49s and 35000 steps of iterations with $\mathrm{NN}_2$ taking 1978s.
And the relative $L_2$ error $E_1$=8.551697e-03 compared to the exact solution (numerically obtained).
Fig.~\ref{f-nls3d}(a1) displays the isosurface of learned soliton at values $0.1$, $0.5$ and $0.9$.
And the 2D profile of the learned solution at $z = 0$ is exhibited in Fig.~\ref{f-nls3d}(a2). The loss-iteration plot of $\mathrm{NN}_2$ is displayed in Fig.~\ref{f-nls3d}(a3).
It should be noted that due to the automatic differentiation algorithm, the memory required by applying machine learning to solve high-dimensional systems is much less than that required by numerical methods.

Table \ref{table1} shows all the examples we considered in the following text, including the equations, the desired solitary wave solutions, the given initial values, the number of iterations, the number of training points, and the relative $L_2$ error.
\begin{table}[!ht]
\vspace{-0.15in}
\begin{center}\footnotesize
\caption{\small \rm The tested some examples and data via the IINN method.}\label{table1}
\vspace{-0.05in}
\begin{tabular}{ c|c|cccccc}
\hline
\hline\\[-2ex]
%\rule{1pt}
Equation & Potential    & Solution       & Initial value          & Step ($\mathrm{NN}_1$)      & Step ($\mathrm{NN}_2$)    & $N$   & $E_1$   \\[0.5ex]
\hline\\[-3ex]

\multirow{4}{*}{1D NLS}
\rule{0pt}{13pt}            &\multirow{2}{*}{/}                   & Bright soliton    & $u_0=\mathrm{sech}(x)$  & 10000 & 25000 & 500 & 1.63e-03\\[0.5ex]
\cline{3-8}\\[-3ex]
\rule{0pt}{13pt}            &                                     & Dark soliton    & $u_0=\tanh(x)$       &10000 & 14000 & 500 & 3.40e-04 \\[0.5ex]
\cline{2-8}\\[-3ex]

\rule{0pt}{13pt}      &\multirow{2}{*}{HG}       & Ground state    & $u_0=\exp(-x^2)$          & 5000 & 30000 & 200 & 2.51e-04\\[0.5ex]
\cline{3-8}\\[-3ex]
\rule{0pt}{13pt}            &                    & Dipole mode    & $u_0=4x\exp(-x^2/2)$       &10000 & 25000 & 200 & 4.66e-04 \\[0.5ex]
\cline{2-8}\\[-3ex]

\rule{0pt}{13pt}            &\multirow{2}{*}{$\mathcal{PT}$ Scarf-II}  &
Soliton (focusing)    & $u_0=\mathrm{sech}(x)\exp(ix)$ & 2000 & 25000 & 200 & 6.83e-04\\[0.5ex]
\cline{3-8}\\[-3ex]
\rule{0pt}{13pt}            &                    & Soliton (defocusing)   & $u_0=\mathrm{sech}(x)\exp(ix)$       &2000 & 25000 & 200 & 5.08e-04\\[0.5ex]
\hline\\[-3ex]

\rule{0pt}{13pt}   1D SNLS   &$\mathcal{PT}$ OL      & Gap soliton   & $u_0=\mathrm{sech}(x)\cos(x)\exp(ix)$  & 15000 & 20000 & 800 & 4.00e-04\\[0.5ex]

\hline\\[-3ex]

\rule{0pt}{13pt}   1D fdCNLS   &/        & Single-soliton   & $\{u_{10}, u_{20}\}=\{2,1\}\mathrm{sech}(x)$  & 5000 & 20000 & 500 & 1.56e-03\\[0.5ex]

\hline\\[-3ex]

\rule{0pt}{13pt}  1D KdV   &/        & Solitary wave   & $u_0=\mathrm{sech}^2(\xi+a)$  & 12000 & 30000 & 500 & 8.50e-04\\[0.5ex]

\hline\\[-3ex]

\multirow{3}{*}{2D NLS}
\rule{0pt}{13pt}     &\multirow{2}{*}{HO}    & Ground state   & $u_0=e^{-0.5(x^2+y^2)}$          & 2000& 20000& 20000 & 4.62e-04\\[0.5ex]
\cline{3-8}\\[-3ex]
\rule{0pt}{13pt}           &                         & Vortex soliton    & $u_0=3re^{-0.5r^2}e^{i\phi}$  & 10000& 20000& 20000 & 1.83e-03\\[0.5ex]
\cline{2-8}\\[-3ex]

\rule{0pt}{13pt}            &Periodic  & Gap soliton    & $u_0=\mathrm{sech}(\sqrt{x^2+y^2})\cos(x)\cos(y)$ & 20000 & 40000 & 20000 & 6.47e-03\\[0.5ex]
\hline\\[-3ex]

\multirow{2}{*}{2D GP}
\rule{0pt}{13pt}     &\multirow{2}{*}{Quadruple-well}    & Branch B1   & $u_0=0.3(e^{-0.1|{\bf r}-{\bf r}_1|^2}+e^{-0.1|{\bf r}-{\bf r}_3|^2})$      & 15000& 80000& 20000 & 4.29e-03\\[0.5ex]
\cline{3-8}\\[-3ex]
\rule{0pt}{13pt}                  &                                     & Branch A1    & $u_0=0.46e^{-0.1|{\bf r}-{\bf r}_3|^2}$  & 10000& 80000& 20000 & 2.85e-03\\[0.5ex]
\hline\\[-3ex]

\rule{0pt}{13pt}  2D KP   &/        & Solitary wave   & $u_0=\mathrm{sech}^2(\xi+y)$  & 10000 & 50000 & 20000 & 8.35e-04\\[0.5ex]
\hline\\[-3ex]

\rule{0pt}{13pt}   3D NLS   &HO        & Bullet   & $u_0=e^{-0.5(x^2+y^2+z^2)}$  & 10000 & 35000 & 40000 & 8.55e-03\\[0.5ex]
\hline
\hline
\end{tabular}
\end{center}
\end{table}

\begin{remark}
In summary, the key to the success of IINN is the choice of initial value $u_0$ as it determines the type of solution we ultimately obtain. Theorem 2 claims that the initial state $u_0$ is close enough to the real solution $u^*$. In fact, since we aim to compute solitary wave solutions with zero boundary conditions, this condition can be relaxed appropriately.
For example, in Example 3.4 for the KdV equation, the learned solutions can also achieve the same accuracy by taking another similar initial value as $u_0(\xi)=e^{-(\xi+2)^2}$. As described in Remark 2, based on the characteristics of the system and our understanding of the system, we can estimate the initial value using physical background knowledge or past experience.
If we know the form of the exact solution, we can give suitable initial values, such as Example 3.1 [Case 1, Case 2, Case 3 (ground state), Case 4], Example 3.3, Example 3.4, and Example 3.8. In the case that one does not know the form of exact solution, the initial conditions can be estimated according to the characteristics of the system.
For instance, for Examples 3.2 and 3.6 with periodic potential, we know the solitons originate from the Bloch-band edges. Therefore, we consider the cosine function term in the initial value.
On the other hand, the initial conditions can be obtained by computing the spectra and eigenmodes in the linear regime.
For example 3.5 and 3.9 with harmonic-oscillator trapping potential, we know the ground state in the linear regime in the form of $Ae^{-r^2}$, where $r$ is the radius in polar coordinates.
We can adjust the previous coefficient $A$ to make $|Lu_0|$ small enough. Furthermore, we exhibit an example to demonstrate the feasibility of this approach [see Case 3 (dipole mode) in Example 3.1].
\end{remark} 

\begin{figure*}[!t]
    \centering
%\vspace{-0.15in}
  {\scalebox{0.57}[0.57]{\includegraphics{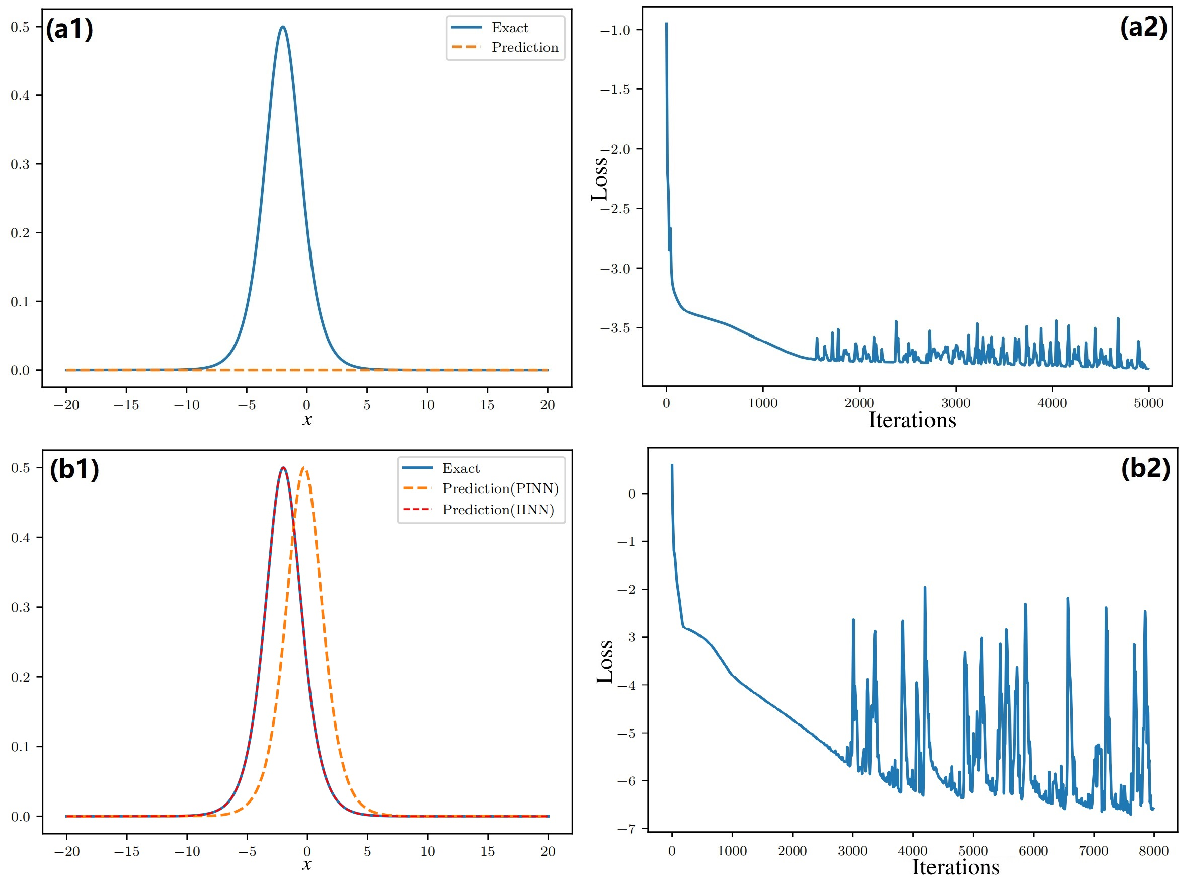}}}
  %\hspace{-0.4in}\vspace{0.15in}
\vspace{0.1in}\caption{\small \rm (a1, a2) The comparison of learned and exact solutions (traveling solitary wave) by classic PINNs method for KdV equation (\ref{kdv}) (Ref. Eq.~(\ref{kdvu1})), and loss-iteration plot.
(b1, b2) The comparison of learned and exact solutions by PINNs with randomly initialized parameters and loss function $\mathcal{L}_3$ given by Eq.~(\ref{Loss3}) for KdV equation, and loss-iteration plot.
}
  \label{f-pinn}
\end{figure*}

\section{Comparison between IINN method and traditional method}
In this section, we give the limitations of the PINNs method by comparison and present the advantages existing in IINN method compared to traditional numerical methods.

\subsection{Limitations of PINNs method}
Considering that the solitary wave for equation (\ref{L0}) is not unique, especially the equation has trivial solution, if we directly apply the PINNs method to the calculation of solitary wave, then we will almost certainly get the trivial solution. For example, for the KdV equation (\ref{kdv}) (Ref. Eq.~(\ref{kdvu1})) considered in the previous section, the network will converge to the trivial solution $u=0$ eventually with loss function $\mathcal{L}_0$ given by Eq.~(\ref{Loss0}) (see Figs.~\ref{f-pinn}(a1, a2)). Therefore, it is almost impossible to directly apply PINNs method to solve solitary waves unless additional information is given in the interior of the region.

Furthermore, if we replace the PDE residual term in the loss function $\mathcal{L}_0$ given by Eq.~(\ref{Loss0}) in PINNs with $\mathcal{L}_2$ given by Eq.~(\ref{Loss2}), that is
\begin{equation}\label{Loss3}
 \mathcal{L}_3:=\d\frac{1}{N_f}\frac{\sum_{\ell=1}^{N_f}|\mathbf{L}\hat{\mathbf{u}}(\mathbf{x}_f^{\ell})|^2}{\max(|\hat{\mathbf{u}}(\mathbf{x}_f^{\ell})|)}+\frac{1}{N_b}\sum_{\ell=1}^{N_b}|\hat{\mathbf{u}}(\mathbf{x}_b^{\ell})|^2,
\end{equation}
it may be that the network will converge to a non-trivial solution, but the solution may not be that we need. Similarly, we consider the KdV equation by the presented IINN method. After 8000 steps of iterations with $\mathrm{NN}_2$ with randomly initialized parameters,
although a solitary wave solution is obtained, it is not the desired one (see Fig.~\ref{f-pinn}(b1)).
If we use IINN method, we can obtain the solitary wave solution centered at any position (see the red dashed line in Fig.~\ref{f-pinn}(b1)).
This is because the equation has infinitely many solutions. And we do not know which one the network with randomly initialized parameters eventually converges to.
According to loss-iteration plot (see Fig.~\ref{f-pinn}(b2)), it can be found that the network has converged.

\begin{figure*}[!t]
    \centering
%\vspace{-0.15in}
  {\scalebox{0.75}[0.75]{\includegraphics{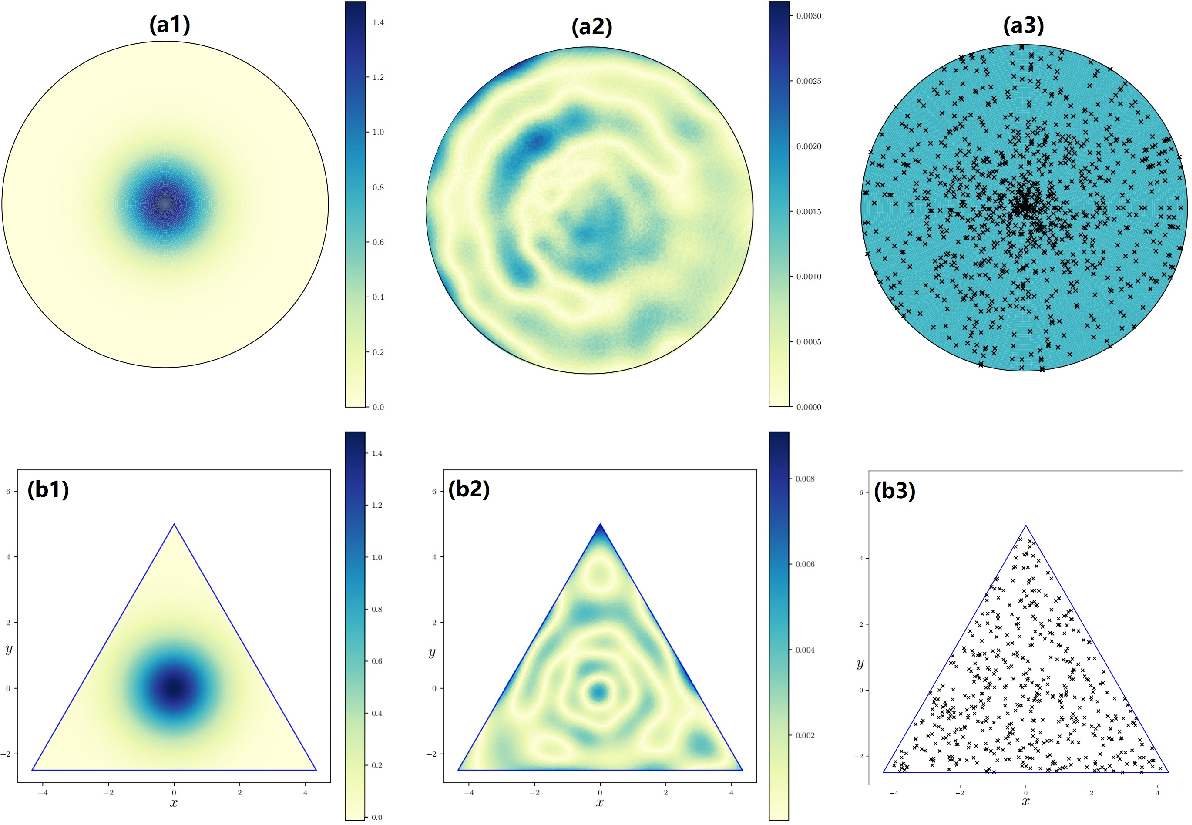}}}
  \hspace{-0.4in}\vspace{0.15in}
\vspace{0.1in}\caption{\small \rm The ground state solution $u(\mathbf{x})$ of 2D NLS equation on the disk. (a1) The intensity diagram $|u(\mathbf{x})|$ of learned solution at $\mu = 0.5$.
(a2) The module of absolute error between the exact and learned solutions. (a3) The randomly selected points on the disk.
The ground state solution $u(\mathbf{x})$ of 2D NLS equation on the equilateral triangle region. (b1) The intensity diagram $|u(\mathbf{x})|$ of learned solution at $\mu = 0.5$.
(a2) The module of absolute error between the exact and learned solutions. (a3) The randomly selected points on the equilateral triangle region.}
  \label{f-NLS2Dp}
\end{figure*}

\subsection{Advantages over traditional numerical methods}

IINN method has many advantages over traditional numerical methods (e.g, Ref.~\cite{spectral}). In general, in traditional numerical methods, differentiation is approximated by difference, which requires the domain to be meshed, and the error depends on the mesh size. Therefore, it is difficult to calculate the difference when dealing with complex region problems. However, due to the automatic differentiation algorithm, we can easily deal with derivatives when applying IINN method.
For example, when computing the ground state solution of 2D NLS equation with HO potential, we only need to consider on the disk, and the rest of the region is useless.
Therefore, we consider $\Omega=\left\{\mathbf{x}|d(\mathbf{x},0)\leq5\right\}$ with $N=1000$. With the same initial conditions as before, we can obtain the ground state solution through IINN method, whose intensity diagram on the disk is shown in Fig.~\ref{f-NLS2Dp}(a1). The relative $L_2$ error $E_1$=2.340686e-03 compared to the exact solution.
The module of absolute error is exhibited in Fig.~\ref{f-NLS2Dp}(a2).  The randomly selected points on the disk is shown in Fig.~\ref{f-NLS2Dp}(a3).
We can see that instead of 20000 training points, it now takes only 1000 training points to achieve the same accuracy.
More especially, we choose the equilateral triangle region with its center at the origin and side length $5\sqrt{3}$. Similarly, we compute the ground state solution of 2D NLS equation with HO potential. With the same initial conditions, we can obtain the ground state solution through IINN method with fewer training points $N=500$, whose intensity diagram is shown in Fig.~\ref{f-NLS2Dp}(b1). The relative $L_2$ error $E_1$=3.572674e-03 compared to the exact solution. The module of absolute error is exhibited in Fig.~\ref{f-NLS2Dp}(b2).  The randomly selected points on the triangle region is shown in Fig.~\ref{f-NLS2Dp}(b3).

On the other hand, when dealing with high-dimensional problems, IINN method has great advantages. For traditional numerical methods, the required memory often increases exponentially with the increase of dimension. However, for IINN method, the change of the dimension has little effect on the memory.
For instance, computing the 3D NLS equation requires only twice as much training points as computing the 2D NLS equation in the previous section.

\section{Summary}
We have proposed the initial value iterative neural network (IINN) algorithm for solitary wave computations.
IINN method combines the ideas of traditional numerical iterative methods and the principles of physics-informed neural networks (PINNs), which consists of two subnetworks. One subnetwork is utilized to fit the given initial value condition, while the other subnetwork incorporates physical information and continues training based on the first network.
Notably, the IINN approach does not require any data information including boundary conditions, except the given initial value. Furthermore, we provide corresponding theoretical guarantees to demonstrate the effectiveness of our method.

We apply the proposed method to compute both the ground states and excited states in a large number of physical systems, such as the one-dimensional NLS equation (with and without potentials), the one-dimensional NLS equation with saturable nonlinearity and $\PT$-symmetric optical lattices, the one-dimensional coupled focusing-defocusing NLS equations,
the KdV equation, the two-dimensional NLS equation, the two-dimensional amended GP equation, the (2+1)-dimensional KP equation, and the three-dimensional NLS equation, which demonstrate the effectiveness of our method. Finally, by comparing with traditional methods, we show the advantages of the IINN approach.

On the other hand, we should note that although the corresponding theoretical guarantees are given, the risk of algorithm divergence still exists.
This is because the choice of initial value $\mathbf{u}_0$ is crucial as it determines the type of solution we ultimately obtain. If the initial value is far from the exact solution, then our method may fail. Furthermore, the accuracy of our method may be closely related to the optimization algorithm. We can use second-order optimization methods, such as L-BFGS optimizer, to further decrease our loss function to improve the accuracy of our learned solutions.

\v \v \noindent {\bf Acknowledgement}

The work  was supported by the National Natural Science Foundation of China  under Grant No. 11925108.

\end{document}